\newtheorem{theorem}{Theorem}
\newtheorem{lemma}{Lemma}
\newtheorem{definition}{Definition}
\begin{document}
\title{Temporal information processing on noisy quantum computers}

\author{Jiayin \surname{Chen}}
\affiliation{School of Electrical Engineering and Telecommunications, The University of New South Wales (UNSW), Sydney NSW 2052, Australia.}
\affiliation{Quantum Computing Center, Keio University, Hiyoshi 3-14-1, Kohoku, Yokohama 223-8522, Japan.}
\author{Hendra I. \surname{Nurdin}}
\email{h.nurdin@unsw.edu.au}
\affiliation{School of Electrical Engineering and Telecommunications, The University of New South Wales (UNSW), Sydney NSW 2052, Australia.}
\author{Naoki \surname{Yamamoto}}
\affiliation{Quantum Computing Center, Keio University, Hiyoshi 3-14-1, Kohoku, Yokohama 223-8522, Japan.}
\affiliation{Department of Applied Physics and Physico-Informatics, Keio University, Hiyoshi 3-14-1, Kohoku, Yokohama 223-8522, Japan.}

\begin{abstract} 
The combination of machine learning and quantum computing has emerged as a promising approach for addressing previously untenable problems. Reservoir computing is an efficient learning paradigm that utilizes nonlinear dynamical systems for temporal information processing, i.e., processing of input sequences to produce output sequences. Here we propose quantum reservoir computing that harnesses complex dissipative quantum dynamics. Our class of quantum reservoirs is universal, in that any nonlinear fading memory map can be approximated arbitrarily closely and uniformly over all inputs by a quantum reservoir from this class. We describe a subclass of the universal class that is readily implementable using quantum gates native to current noisy gate-model quantum computers. Proof-of-principle experiments on remotely accessed cloud-based superconducting quantum computers demonstrate that small and noisy quantum reservoirs can tackle high-order nonlinear temporal tasks. Our theoretical and experimental results pave the path for attractive temporal processing applications of near-term gate-model quantum computers of increasing fidelity but without quantum error correction, signifying the potential of these devices for wider applications including neural modeling, speech recognition and natural language processing, going beyond static classification and regression tasks.
\end{abstract}

\maketitle
\pagenumbering{arabic}

\section{Introduction} 
The ingenious use of quantum effects has led to a significant number of quantum machine learning algorithms that offer computational speed-ups \cite{biamonte2017quantum,ciliberto2018quantum}. While awaiting the demonstration of these quantum algorithms on full-fledge quantum computers equipped with quantum error correction, quantum computing has transitioned from theoretical ideas to the noisy intermediate-scale quantum (NISQ) technology era \cite{Preskill18}. Hybrid quantum-classical algorithms using short-depth circuits are particularly suitable for implementation on NISQ devices. Many notable experimental demonstrations of NISQ devices employ hybrid algorithms for data classification \cite{havlivcek2019supervised} and quantum chemistry \cite{kandala2017hardware}. An on-going quest is to find interesting applications on quantum computers with increasingly lower noise profile but not reaching a low enough threshold to enable continuous quantum error correction. 

Here we propose a hybrid quantum-classical algorithm that \textit{utilizes dissipative quantum dynamics} as reservoir computers (RC) for temporal information processing on gate-model NISQ quantum computers. The goal for temporal information processing tasks, such as speech processing and natural language processing \cite{MH19,HR19}, is to learn the relationship between input sequences and output sequences. The RC framework uses an arbitrary but fixed dynamical system (in this case systems with dynamics described by state-space difference equation), the ``reservoir'', to map sequential inputs into its high-dimensional state-space. Only a simple linear regression algorithm is required to optimize the parameters of a readout function to approximate target outputs. The use of a simple linear readout has connections to the biological concept of mixed selectivity, as demonstrated in monkeys \cite{RBWWDMF13}. The attractiveness of the RC scheme is that naturally occurring dynamical systems (with some desired properties) in physics and engineering can be exploited for temporal information processing without precise tuning of its parameters, circumventing the expensive training cost in alternative schemes such as recurrent neural networks with tunable internal weights \cite{pascanu2013difficulty}. The ease of RC implementation has brought forward many successful hardware implementations of classical (i.e., non-quantum) RC schemes \cite{du2017reservoir,vandoorne2014experimental,vinckier2015high}. A spintronic RC achieved state-of-the-art performance on a spoken digit recognition task \cite{torrejon2017neuromorphic} and a photonic RC demonstrated high-speed speech classification with a low error \cite{larger2017high}. For theoretical developments, \cite{gonon2020approximation} derives an approximation error upper bound for certain classical RCs on learning a class of input-output maps (not necessarily fading memory maps considered here). Information processing capacity of various RC schemes has been investigated \cite{dambre2012information,gonon2020memory}. See \cite{tanaka2019recent,nakajima2020physical} for further interests and developments of RCs.

In this work, we employ dissipative quantum systems as quantum reservoirs (QRs) to approximate nonlinear input-output maps with fading memory. A map has fading memory if its outputs depend increasingly less on inputs from earlier times. These maps are important in a broad class of real-world problems including spoken digit recognition \cite{torrejon2017neuromorphic} and neural modeling \cite{MNM02}. The use of quantum systems as QRs was initially proposed in \cite{fujii2017harnessing,nakajima2019boosting} to harness disordered-ensemble quantum dynamics for temporal information processing. This QR class is suitable for ensemble quantum systems and a static (non-temporal) version of \cite{fujii2017harnessing} was demonstrated in NMR to approximate static maps \cite{negoro2018machine}. However, it remained an open problem to show this QR class has the properties required for reservoir computing. Chen and Nurdin \cite{chen2019learning} addressed this problem by demonstrating that a variation of the scheme proposed in \cite{fujii2017harnessing,nakajima2019boosting} is universal for nonlinear fading memory maps, meaning that given any target nonlinear fading memory map, there exists a member in the universal QR class whose outputs approximate the target map's outputs arbitrarily closely and uniformly over the input sequences. This is a quantum analogue of the universal function approximation property feed-forward neural networks enjoy \cite{cybenko1989approximation, hornik1989multilayer}, but for nonlinear fading memory mappings from input sequences to output sequences. The notion of universality we adopt here was previously established for classical RC schemes \cite{grigoryeva2018universal,GO18,MNM02} and the Volterra series \cite{BC85}. In particular, \cite{GO18} proves this universality property for a form of recurrent neural networks called echo-state networks \cite{JH04}. However, realizing these previous QR proposals in the quantum gate-model remains challenging due to the large number of quantum gates required to implement the dynamics via Trotterization.

The contribution of this work is twofold. Firstly, we propose a new class of QRs endowed with the fading memory and universality properties that is not implemented by Ising Hamiltonians, circumventing the need for Trotterization required in previous proposals. Secondly, we propose a realization of a subclass of the universal QR class on NISQ devices and present proof-of-principle experiments on remotely accessed IBM superconducting quantum processors \cite{IBMQ}, i.e., NISQ devices not yet equipped with quantum error correction. The QR dynamics in this subclass can be implemented using arbitrary but fixed quantum circuits, as long as they generate non-trivial dynamics. This could be, for instance, quantum circuits that are classically intractable to simulate. The quantum circuits can be of short lengths and can be implemented using parametrized single-qubit and multi-qubit quantum gates native to the quantum hardware, without the need for precise tuning of their gate parameters. Our proof-of-principle experiments show that QRs with a small number of qubits operating in a noisy environment can tackle complex nonlinear temporal tasks, even under current hardware limitations and in the absence of readout and process error mitigation techniques. This work serves as the first theoretical and experimental realization of applying near-term gate-model quantum computers to nonlinear temporal information processing tasks, opening an avenue for time series modeling and signal processing applications of these devices.

The rest of this paper is organized as follows. In Sec.~\ref{sec:temporal-task} we introduce fading memory maps and describe two temporal information processing tasks for these maps. Sec.~\ref{sec:RC} introduces the RC framework and explains conditions for which a RC defines a fading memory map. Sec.~\ref{sec:universal-QRC} presents our QR proposal and the universality result. We then propose a subclass  of the universal class suitable for implementation on current noisy gate-model quantum computers. We conclude the section by discussing invariance properties of the universal class under certain hardware imperfections. Sec.~\ref{sec:realization} details two hardware realizations of the aforementioned subclass of the universal one and presents more efficient versions of both schemes that could enable QR's potential for more scalable temporal processing on gate-model quantum devices. Sec.~\ref{sec:exp} details our proof-of-principle experiments performed on cloud-based IBM superconducting quantum devices. We provide concluding remarks in Sec.~\ref{sec:conclusion}. Detailed mathematical derivations and experimental settings are provided in the Appendix.

\section{Temporal information processing}
\label{sec:temporal-task}
We consider a input-output (I/O) map $M$ that maps infinite input sequences $u= \{\ldots, u_{-1}, u_{0}, u_{1}, \ldots \}$ to infinite output sequences $y = \{ \ldots, y_{-1}, y_{0}, y_{1}, \ldots \}$, where $u_l , y_l \in \mathbb{R}$ for $l\in \mathbb{Z}$ and $y_l = M(u)_l$ is the output at time $l$. We write $u\rvert_{L:L'} = \{ u_L, \ldots, u_{L'}\}$ and $y\rvert_{L:L'} = \{y_L, \ldots, y_{L'}\}$ to denote the inputs and outputs during time $l = L, \ldots , L'$. In practice, such I/O maps can be realized by convergent dynamical systems, that is, systems that forget their initial condition (see Appendix~\ref{app-subsec:cv} for details). If such a dynamical system with state $\textbf{x}_l$ is initialized at time $l_0$ at the state $\textbf{x}_{l_0}$ and given an input sequence $\{u_{l_0}, u_{l_0+1},\ldots\}$ and the system outputs the sequence $\{y_{l_0},y_{l_0+1},\ldots\}$, then it realizes an I/O map $M$ for any initial condition $\textbf{x}_{l_0}$ as $l_0 \rightarrow -\infty$.

Two challenging temporal information processing problems are posed to learn the I/O relationship given by $M$ based on the I/O pair $u, y$. The first is the multi-step ahead prediction problem, in which we are given inputs $u\rvert_{1:L}$ and the corresponding outputs $y\rvert_{1:L}$. The first $L_T < L$ input-output data pair ($u\rvert_{1:L_T}, y\rvert_{1:L_T}$) is the train data. In the sequel, we use the input-output train data during $l=5, \ldots, L_T$. The reason for this is to remove the transient response in the data, see Sec.~\ref{subsec:exp-im} for a discussion. The goal is to use the train data to optimize the parameters $\textbf{w}$ of another I/O map $\overline{M}_\textbf{w}$, so that the outputs $\overline{y}\rvert_{L_T+1:L}=\{\overline{y}_{L_T+1}, \ldots \overline{y}_L\}$, where $\overline{y}_l = \overline{M}_\textbf{w}(u)_l$, approximate the target outputs $y\rvert_{L_T + 1 : L}$. The second problem is the map emulation problem, that is to optimize $\textbf{w}$ of $\overline{M}_\textbf{w}$ to emulate $M$ using $k=1, \ldots, K$ different I/O train data pairs $(u^k\rvert_{1:L'}, y^k\rvert_{1:L'})$, so that the total number of train data is $KL'$ (we will again use train data during $l=5, \ldots, l=L'$ in Sec.~~\ref{subsec:exp-im}). When given a previously unseen input $u^{K+1}\rvert_{1:L'}$, the task is for $\overline{y}^{K+1}\rvert_{1:L'}$ to approximate $y^{K+1}\rvert_{1:L'}$.

If an I/O map $M$ has fading memory, then its output at time $l'$ becomes increasingly less dependent on input samples $u_l$ from much earlier times $l \ll l'$; see Appendix~\ref{app-subsec:fmp}. In this work, we approximate nonlinear fading memory I/O maps using RCs implemented by quantum dynamical systems. We will introduce conditions for which a reservoir dynamical system defines a fading memory I/O map in the next section.

\section{Reservoir computing}
\label{sec:RC}
To approximate fading memory maps, RC exploits nonlinear dynamical systems to project the input $u_l$ into a reservoir state $\textbf{x}_l$ at time $l$. A RC is governed by a dynamics $f$ with state evolution $\textbf{x}_{l} = f(\textbf{x}_{l-1}, u_l)$. The dynamics of the reservoir can be arbitrary but fixed as long as it satisfies some required properties, and never requires training. We require the RC to satisfy the echo-state property \cite{JH04} or the convergence property \cite{PWN05}, so that the RC asymptotically forgets its initial condition. The tunable parameters $\textbf{w}$ appear in a readout function $h_{\textbf{w}}$, which combines the elements of $\textbf{x}_l$ into an output $\overline{y}_l = h_{\textbf{w}}(\textbf{x}_l)$. For a sufficiently long input sequence $\{u_{l_0}, u_{l_0+1}, \ldots, u_{0}\}$, the effect of the RC's initial condition can be washed-out. As discussed in Sec.~\ref{sec:temporal-task}, as $l_0 \rightarrow -\infty$, the combination of a convergent RC dynamics $f$ and the readout function $h_\textbf{w}$ produces an I/O map $\overline{M}_{(f, h_\textbf{w})}$. After the washout, the readout parameters $\textbf{w}$ can be optimized using linear regression to minimize an empirical mean squared-error between $y_{1:L_T}$ and $\overline{y}_{1:L_T}$. As in previous works \cite{BC85,MNM02,grigoryeva2018universal,GO18}, we consider $\overline{M}_{(f, h_\textbf{w})}$ that has the fading memory property.

Echo-state networks, one of the pioneering classical RC schemes, have been numerically demonstrated to achieve state-of-the-art performance in chaotic system modeling \cite{JH04}. Subsequent hardware realizations of RC proposals exploit classical dynamical systems for real-time temporal processing tasks that demand less energy or computational memory \cite{du2017reservoir,vandoorne2014experimental,vinckier2015high,torrejon2017neuromorphic,larger2017high}. These experiments also suggest empirically that for certain tasks, such as spoken digit recognition, the reservoir state dimension plays a role in the RC's task performance.

\section{Universal quantum reservoir computers}
\label{sec:universal-QRC}
We propose to use a QR, with a view towards possibly taking advantage of fast quantum dynamics and its exponentially large state space. A QR consists of $N$ non-interacting subsystems, each subsystem $k$ has $n_k$ number of qubits so that the QR has $n=\sum_{k=1}^{N}n_k$ qubits. The QR density operator $\rho_l$ at time $l$ evolves according to
\begin{equation} \label{eq:dynamics}
\rho_l = T(u_l) \rho_{l-1} = \bigotimes_{k=1}^{N} T^{(k)}(u_l) \rho^{(k)}_{l-1},
\end{equation}
and the $k$-th subsystem density operator $\rho^{(k)}_l$ undergoes the evolution
\begin{equation} \label{eq:dynamics-subsystem}
\begin{split}
& T^{(k)}(u_{l})\rho^{(k)}_{l-1} \\
& = (1-\epsilon_k)\left(u_{l} T^{(k)}_{0} + (1-u_{l}) T^{(k)}_1\right)\rho^{(k)}_{l-1} + \epsilon_k \sigma_k,
\end{split}
\end{equation}
for input $0 \leq u_{l} \leq 1$. Here, $0 < \epsilon_k \leq 1$, $\sigma_k$ is an arbitrary but fixed density operator, and $T^{(k)}_{0}$ and $T^{(k)}_{1}$ are two arbitrary but fixed completely positive trace-preserving (CPTP) maps. Examples of such maps include some naturally occurring noisy quantum channels, such as dephasing or amplitude damping channels; see \cite{NC10}. No precise tuning or engineering of the CPTP maps $T^{(k)}_0, T^{(k)}_1$ is required for the QR scheme and it should not generate trivial dynamics (i.e., we should not choose $T^{(k)}_0 = T^{(k)}_1$). They could potentially be classically intractable to simulate CPTP maps. The QR dynamics Eq.~\eqref{eq:dynamics}--\eqref{eq:dynamics-subsystem} is convergent, meaning that it will asymptotically forget its initial condition; see Appendix~\ref{app-subsec:cv} for the proof. Given inputs $\{ u_{l_0}, u_{l_0 + 1}, \ldots,u_0\}$ and $l_0 \rightarrow -\infty$, the convergence property ensures the QR state $\rho_0$ evolves according to Eqs.~\eqref{eq:dynamics}--\eqref{eq:dynamics-subsystem} is determined by $\{ u_{l_0}, u_{l_0 + 1}, \ldots,u_0\}$ and $T^{(k)}_0, T^{(k)}_1$, but not by its initial state $\rho_{l_0}$.

We obtain partial information about $\rho_l$ by measuring each qubit in the Pauli $Z$ basis to obtain $\langle Z^{(i)} \rangle_l = {\rm Tr}(\rho_l Z^{(i)})$ for $i=1,\ldots,n$, where $Z^{(i)}$ acts on qubit $i$. We associate the readout function Eq.~\eqref{eq:readout} to the QR dynamics Eq.~\eqref{eq:dynamics}. The readout function Eq.~\eqref{eq:readout} is a multivariate polynomial of degree $R$ in the variables $\langle Z^{(i_j)} \rangle_l$. A simple linear form ($R=1$) is employed in our proof-of-principle experiments in Sec.~\ref{sec:exp}. The tunable readout parameters $\textbf{w} = \{ \textbf{w}_{i_1, \ldots, i_n}^{r_{i_1}, \ldots, r_{i_n}}, \textbf{w}_c \}$ can be optimized via linear regression. Eqs.~\eqref{eq:dynamics} and \eqref{eq:readout} define a QR implementing an I/O map $\overline{M}_{(T,h_\textbf{w})}$ that depends on the QR dynamics $T$ and the readout function $h_\textbf{w}$. We show in Appendix~\ref{app-subsec:fmp} that $\overline{M}_{(T, h_\textbf{w})}$ has the fading memory property. Now consider the class $\mathcal{M}$ of I/O maps $\overline{M}_{(T, h_\textbf{w})}$ arising from differing numbers of subsystems $N$, numbers of qubits $n$, QR dynamics $T(u_l)$, readout parameters $\textbf{w}$ and degree $R$ of $h_\textbf{w}$. Our main result shows that the class $\mathcal{M}$ is universal for approximating nonlinear fading memory maps.

\begin{theorem}[Universality]
Let $K([0, 1])$ be the set of input sequences $\{u_l\}$ with $0 \leq u_l \leq 1$ for $l \in \mathbb{Z}$. For any nonlinear fading memory map $M$ and any $\delta > 0$, there exists $\overline{M}_{(T, h_\textbf{w})} \in \mathcal{M}$ implemented by some QR such that for all $u \in K([0, 1])$, $\sup_{l \in \mathbb{Z}} \left| M(u)_l - \overline{M}_{(T, h_\textbf{w})}(u)_l \right| < \delta$.
\end{theorem}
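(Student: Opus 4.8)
The plan is to prove universality via the Stone--Weierstrass theorem, following the functional-analytic strategy used for classical reservoirs in \cite{grigoryeva2018universal,GO18}. First I would reduce the uniform-over-time statement to a statement about a single functional. Since both the target $M$ and every reservoir map $\overline{M}_{(T,h_\textbf{w})}$ are causal, time-invariant, and have fading memory, it suffices to approximate the time-$0$ readout functional. Concretely, fading memory lets one restrict attention to left-infinite input histories $u\rvert_{-\infty:0}$, and one equips the space of such histories with inputs in $[0,1]$ with a weighted supremum norm $\|u\|=\sup_{d\ge 0}|u_{-d}|\,\lambda^{d}$ for some $0<\lambda<1$. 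Under this norm the history space $K_-$ is a compact metric space, and fading memory maps are in bijection with continuous functionals on $K_-$. The theorem then follows once I show that the family $\mathcal{A}=\{\,u\mapsto \overline{M}_{(T,h_\textbf{w})}(u)_0 : \overline{M}_{(T,h_\textbf{w})}\in\mathcal{M}\,\}$ is dense in $C(K_-)$, because time-invariance transports the time-$0$ bound to the $\sup_{l\in\mathbb{Z}}$ bound.

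Next I would verify the algebraic hypotheses of Stone--Weierstrass for $\mathcal{A}$. Because the QR is built from an arbitrary number of non-interacting subsystems all driven by the same scalar input, and the readout is an arbitrary-degree polynomial in the observables $\langle Z^{(i)}\rangle_0$, the set $\mathcal{A}$ is closed under the required operations: given two reservoirs realizing functionals $H_1,H_2$, adjoining their subsystems produces a single QR whose measured observables are exactly the union of the two (the marginals are unchanged since the subsystems do not interact), and then the polynomial readouts $\alpha h_{\textbf{w}_1}+\beta h_{\textbf{w}_2}$ and $h_{\textbf{w}_1}h_{\textbf{w}_2}$ realize $\alpha H_1+\beta H_2$ and $H_1 H_2$ respectively. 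The constant functional is realized by the readout's constant term $\textbf{w}_c$. Hence $\mathcal{A}$ is a unital subalgebra of $C(K_-)$.

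The main obstacle is point separation: for any two distinct histories $u\neq v$ I must exhibit a reservoir functional taking different values. Here I would use the explicit convergent expansion obtained by iterating Eqs.~\eqref{eq:dynamics}--\eqref{eq:dynamics-subsystem}. Writing $C^{(k)}(u)=(1-\epsilon_k)\big(T^{(k)}_1+u(T^{(k)}_0-T^{(k)}_1)\big)$, each subsystem state satisfies
\begin{equation}
\rho^{(k)}_0=\epsilon_k\sum_{d=0}^{\infty}\Big(\textstyle\prod_{j=0}^{d-1}C^{(k)}(u_{-j})\Big)\sigma_k,
\end{equation}
so every observable $\langle Z^{(i)}\rangle_0$ is an absolutely convergent series that is multilinear in the delayed inputs, with the delay-$d$ contribution affine in $u_{-d}$ and damped by $(1-\epsilon_k)^d$. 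Distinct histories differ at some delay $d$, so the task is to choose nontrivial $T^{(k)}_0\neq T^{(k)}_1$, $\sigma_k$ and $\epsilon_k$ for which the dependence on $u_{-d}$ does not cancel in some measured observable. The delicate point is exactly this non-cancellation: one must ensure that the operator $T^{(k)}_0-T^{(k)}_1$ acts nontrivially on the relevant part of the state and that the $Z$-measurement does not annihilate it, which is where the \emph{non-trivial dynamics} assumption $T^{(k)}_0\neq T^{(k)}_1$ is essential. Combined with the algebra property, which lets me build products $u_{-d_1}\cdots u_{-d_m}$ of delayed inputs, this yields a point-separating family; it suffices to realize the monomials in finitely many delayed inputs, since these already separate histories on a compact truncation while the fading-memory damping controls the tail.

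Finally, with $\mathcal{A}$ a point-separating unital subalgebra of $C(K_-)$, the Stone--Weierstrass theorem gives $\overline{\mathcal{A}}=C(K_-)$. Approximating the continuous functional $u\mapsto M(u)_0$ within $\delta$ by some member of $\mathcal{A}$ and invoking time-invariance then delivers $\sup_{l\in\mathbb{Z}}\sup_{u\in K([0,1])}|M(u)_l-\overline{M}_{(T,h_\textbf{w})}(u)_l|<\delta$. I note that nonlinearity of $M$ is accommodated precisely by the polynomial readout with degree $R>1$: a purely linear ($R=1$) readout would produce only a linear subspace that is not closed under multiplication, so it could not supply the multiplicative structure Stone--Weierstrass requires. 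The degree-$R$ readout is therefore not a convenience but a necessity for approximating nonlinear targets.
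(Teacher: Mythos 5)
Your overall strategy coincides with the paper's: reduce to the time-zero functional on left-infinite histories equipped with a weighted sup norm, invoke compactness of that space, check that $\mathcal{M}$ is a unital polynomial algebra (adjoining non-interacting subsystems and multiplying polynomial readouts), and close with Stone--Weierstrass. Those parts are sound and match the paper essentially step for step.

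The gap is in the separation-of-points step, which you identify as ``the delicate point'' but do not actually resolve. Two distinct issues are conflated. First, the assumption $T^{(k)}_0 \neq T^{(k)}_1$ only addresses whether an individual delay-$d$ coefficient in your expansion can be made nonzero; it does not rule out the possibility that, for a \emph{particular} pair $u \neq v$, the full convergent series $\sum_d c_d(u_{-d}-v_{-d})$ vanishes by cancellation across delays. You need a one-parameter family of reservoirs and an argument that for every fixed pair $u\neq v$ at least one member gives a nonzero difference. The paper supplies this by constructing an explicit single-qubit reservoir (system plus ancilla coupled through an Ising-type Hamiltonian with parameters $J,\alpha$) whose transfer matrix is computed in closed form, so that the output difference becomes a power series $f(\theta)=\sum_j \theta^j(u_{-j}-v_{-j})$ in $\theta=(1-\epsilon)\cos^2(2J)$; since $u\neq v$ makes $f$ a non-constant analytic function, its zeros are isolated, and an admissible $\theta$ with $f(\theta)\neq 0$ exists. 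Nothing in your sketch plays the role of this isolated-zeros argument. Second, your fallback claim that ``it suffices to realize the monomials in finitely many delayed inputs'' is unsupported: the reservoir dynamics always produces a functional mixing \emph{all} delays through the convergent series, and you have not exhibited any member of $\mathcal{M}$ that realizes a pure delay-$d$ functional $u\mapsto u_{-d}$ for $d\geq 1$ (the paper's construction isolates only the $d=0$ term, by choosing $\cos^2(2J)=0$, and must handle the case $u_0=v_0$ separately via the analytic-function argument). As written, the proposal therefore establishes the algebra and constants hypotheses of Stone--Weierstrass but leaves point separation as an unproven assertion, and the theorem does not follow.
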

We remark that universality is a property of the QR class $\mathcal{M}$ and not of an individual member of $\mathcal{M}$. The universality proof employs the Stone-Weierstrass Theorem \cite[Theorem~7.3.1]{Dieudonne13}, see Appendix~\ref{app-subsec:universal} for the proof. Besides the universality property, our proposed universal QR class exhibits invariance properties under certain hardware imperfections; see Sec.~\ref{subsec:invariance} below.

\begin{widetext}
\begin{equation}
\vspace*{-3.5em}
\label{eq:readout}
\overline{y}_{l}  = h_\textbf{w}(\rho_{l}) = \sum_{d=1}^{R} \sum_{i_{1} = 1}^{n}  \cdots \sum_{i_{n} = i_{n-1}+1}^{n}  \sum_{r_{i_1} + \cdots + r_{i_n} = d} \textbf{w}_{i_1, \ldots, i_{n}}^{r_{i_1}, \ldots, r_{i_n}} \langle Z^{(i_{1})} \rangle_{l}^{r_{i_1}} \cdots \langle Z^{(i_n)} \rangle_{l}^{r_{i_n}} + \textbf{w}_c,
\end{equation}
\end{widetext}

\begin{figure}
\includegraphics[scale=1, trim={2em 6em 2em 6em}, clip]{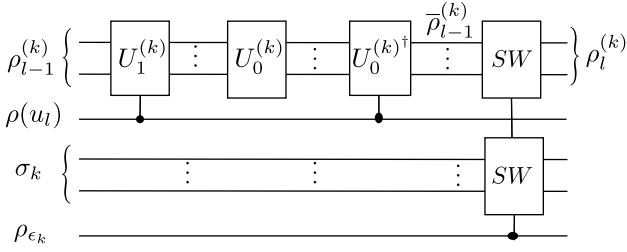}
\caption{Quantum circuit interpretation of the QR universal subclass described in Sec.~\ref{subsec:subclass}. Here $\rho^{(k)}_{l-1}$ and $\sigma_k$ are two quantum registers (i.e., groups of qubits) whereas $\rho(u_l)$ and $\rho_{\epsilon_k}$ are two single-qubit states. The unitaries $U^{(k)}_1, U^{{(k)}^\dagger}_0$ act on $\rho^{(k)}_{l-1}$, controlled by $\rho(u_l)$. The right-most operation ($SW$'s) swaps the states of $\overline{\rho}^{(k)}_{l-1}$ and $\sigma_k$, controlled by $\rho_{\epsilon_k}$.}
\label{fig:1} 
\vspace*{-0.5cm}
\end{figure}

\subsection{A subclass implementable on noisy gate-model quantum devices}
\label{subsec:subclass}
With a limited number of qubits and other current quantum hardware restrictions, not all QR dynamics of the form Eqs.~\eqref{eq:dynamics}--\eqref{eq:dynamics-subsystem} can be efficiently implemented. Here we describe a subclass of the universal QR class implementable on current gate-model quantum devices.

QRs in this subclass are governed by Eqs.~\eqref{eq:dynamics}--\eqref{eq:dynamics-subsystem} with unitary evolutions $T^{(k)}_{j}(\rho^{(k)}_{l-1}) = U^{(k)}_j \rho^{(k)}_{l-1} U^{(k)^\dagger}_j$ ($j=0, 1$), where the unitaries $U^{(k)}_0$ and $U^{(k)}_1$ are arbitrary but fixed. In practice, $U^{(k)}_j$ can be implemented by native quantum gates of the NISQ devices, possibly composed of single-qubit and multi-qubit gates each parameterized by some gate parameter. These gate parameters can be chosen arbitrarily but fixed and should not generate trivial dynamics (e.g., we should not have $U^{(k)}_0= U^{(k)}_{1}$), thus precise tuning of these parameters is not required. In Sec.~\ref{subsec:qr-circuits}, we suggest some natural choices of $U^{(k)}_j$ tailored for the cloud-based IBM quantum devices \cite{IBMQ}. The QR dynamics in this subclass has a natural quantum circuit interpretation, see Fig.~\ref{fig:1}. The state $\rho(u_l)$ encodes the input $u_l$ as a classical mixture $\rho(u_l) = u_l |0 \rangle \langle 0| + (1-u_l) |1 \rangle \langle 1|$, meaning that we apply $U^{(k)}_0 \rho^{(k)}_{l-1} U^{(k)^\dagger}_0$ with probability $u_l$, and apply $U^{{(k)}^\dagger}_0 U^{(k)}_0 U^{(k)}_1 \rho^{(k)}_{l-1} U^{{(k)}^\dagger}_1 U^{{(k)}^\dagger}_0 U^{(k)}_0 = U^{(k)}_1 \rho^{(k)}_{l-1} U^{{(k)}^\dagger}_1$ with probability $1-u_l$. Let $\overline{\rho}^{(k)}_{l-1}$ denote the QR's $k$-th subsystem state after these operations. The state $\rho_{\epsilon_k}$ is a classical mixture $\rho_{\epsilon_k} = (1-\epsilon_k) | 0 \rangle \langle 0| + \epsilon_k |1 \rangle \langle 1|$ that encodes the rate $\epsilon_k$ at which the $k$-th subsystem forgets its initial conditions. That is, with probability $\epsilon_k$, the states $\overline{\rho}^{(k)}_{l-1}$ and $\sigma_k$ are exchanged, equivalent to resetting the state  $\overline{\rho}^{(k)}_{l-1}$ to the fixed density operator $\sigma_k$; otherwise the state $\overline{\rho}^{(k)}_{l-1}$ is unchanged with probability $1-\epsilon_k$. We again associate the readout function Eq.~\eqref{eq:readout} to this QR subclass.

\subsection{Invariance under stationary Markovian hardware noise and time-invariant readout error}
\label{subsec:invariance}
The QR dynamics Eq.~\eqref{eq:dynamics} is invariant under stationary Markovian noise. A stationary Markovian noise process acting on the $k$-th subsystem during some time interval $\tau(l-1) \leq t \leq \tau l$, where $l$ is the time step and $\tau>0$, can be modeled as a CPTP map $\mathcal{T}^{(k)} $ for all $l \geq 0$. The $k$-th subsystem's dynamics Eq.\eqref{eq:dynamics-subsystem} under this noise process is
\begin{equation*}
\begin{split}
& \rho^{(k)}_{l} = (1-\epsilon_k) \left(u_l \mathcal{T}^{(k)} \circ T^{(k)}_0 + (1-u_l) \mathcal{T}^{(k)} \circ T^{(k)}_1 \right) \rho^{(k)}_{l-1} \\
& \hspace*{3em} + \epsilon_k \mathcal{T}^{(k)}(\sigma_k),
\end{split}
\end{equation*}
where $\mathcal{T}^{(k)} \circ T^{(k)}_{j}$ is again some CPTP for $j=0, 1$ and $\mathcal{T}^{(k)}(\sigma_k) = \sigma'_k$ is again some fixed density operator. The resulting noisy dynamics again has the form Eq.~\eqref{eq:dynamics-subsystem} and the form of QR dynamics Eq.~\eqref{eq:dynamics} also remains unchanged. That is, the universal family $\mathcal{M}$ is \textit{invariant and remains universal} under stationary Markovian noise. For hardware implementation of the QR subclass described in Sec.~\ref{subsec:subclass}, if the hardware noise is stationary and Markovian, then it acts to replace $U^{(k)}_j \rho^{(k)}_{l-1} U^{{(k)}^\dagger}_j$ with another CPTP map $\overline{T}^{(k)}_j(\rho^{(k)}_{l-1})$. The resulting noisy QR dynamics is again of the form Eq.~\eqref{eq:dynamics}.

Stationary Markovian noise model is the noise model adopted in the IBM Qiskit simulator \cite{Aernoise,NoiseModel}. The Qiskit noisy simulation approximates the hardware noise  as a CPTP map being applied after the application of a unitary gate. The noise parameters are estimated during periodic calibrations on the hardware. Between two calibrations, the calibrated noise parameters remain unchanged and the noisy simulation approximates the hardware noise by a stationary Markovian noise model. However, during the experiments, the underlying hardware noise could potentially be time-varying. Considering these factors, the agreement between our experimental and Qiskit noisy simulation results (see Appendix~\ref{app-subsec:measurement-data} for the data) indicate the underlying hardware noise approximately preserves the QR dynamics of the form Eq.~\eqref{eq:dynamics} during the experiments. If the underlying noise is non-stationary but changes slowly, the QR output weights can be re-trained periodically using most recently gathered data. This remains challenging to be demonstrated on current cloud-accessed only NISQ devices but can be possible on future NISQ machines.

Furthermore, QR predicted outputs remain unchanged under time-invariant readout error whenever a linear readout function is used (i.e., $R=1$ in Eq.~\eqref{eq:readout}, which is often employed in practice and in our proof-of-principle experiments). This is because time-invariant readout error introduces a time-invariant linear transformation of the measurement data and if the output weights $\textbf{w}_{i_1, \ldots, i_n}^{r_{i_1}, \ldots, r_{i_n}}$ and $\textbf{w}_c$ are optimized via linear regression, the resulting QR predicted outputs $\overline{y}_l$ remain unchanged; see Appendix~\ref{app-sec:readout-error} for the derivation.

\section{Realization of a subclass on current quantum hardware}
\label{sec:realization}

We present two implementation schemes of the subclass described in Sec.~\ref{subsec:subclass} on current gate-model quantum computers, such as on the IBM superconducting quantum devices. The first scheme takes into account limitations of some current hardware, and the second scheme employs quantum non-demolition (QND) measurements to substantially reduce the number of circuit runs required. We further show that QR's convergence property leads to more efficient versions of both schemes. Here, we focus on $n$-qubit QRs with a single subsystem ($N=1$ in Eq.~\eqref{eq:dynamics}) and drop the subsystem index $k$ in Eq.~\eqref{eq:dynamics-subsystem}. The case with multiple subsystems ($N > 1$) is a straightforward extension. We may choose $\sigma = |\psi \rangle \langle \psi|$ with an easy to prepare pure state $|\psi\rangle$. In all schemes, we initialize the QR circuits in $|0\rangle^{\otimes n}$.

The first implementation follows from an earlier work \cite[Sec.~III]{JNY19} and is employed in our proof-of-principle experiments (see Sec.~\ref{sec:exp}). We consider NISQ devices that allow pure state preparation. Instead of realizing Fig.~\ref{fig:1} that requires mixed state preparation, we efficiently implement QRs through Monte Carlo sampling. We construct $N_m$ circuits, such that for each circuit and at each timestep $l$, we apply $U_0$ and $U_1$ with probabilities $(1-\epsilon)u_{l}$ and $(1-\epsilon)(1-u_l)$, respectively; otherwise the circuit is set in $|\psi \rangle$ with probability $\epsilon$. Therefore, for each $N_m$ circuits and each time $l$, implementing the input-dependent QR dynamics $T(u_l)$ in Eq.~\eqref{eq:dynamics} amounts to applying the gate sequence realizing $U_0$ or $U_1$, or resetting the circuit in $|\psi \rangle$. As $N_m$ is increased, the average of all measurements gives a more accurate estimate of the true expectation $\langle Z^{(i)} \rangle_l$. Furthermore, some current NISQ devices do not allow qubit reset, meaning that once a qubit is measured, it cannot be re-used for computation. To estimate $\langle Z^{(i)} \rangle_l$, we re-initialize $N_m$ circuits in $| 0 \rangle^{\otimes n}$ and re-apply $T(u_k)$ from time $k=1$ to time $k=l$, and only measuring $Z^{(i)}$ at the final time $l$. Each of the $N_m$ circuits is run for $S$ shots at each time $l$. To process a length-$L$ input sequence under the pure state and qubit re-set limitations requires $N_m S L$ circuit runs and $N_m S (1 + \cdots + L) = N_m S (L+1) L / 2$ applications of $T(u_l)$.

If qubit reset is available, a more efficient scheme using QND measurements \cite{braginsky1995quantum} can be realized, see Appendix~\ref{app-sec:qnd} for the details. We no longer need to re-run the $N_m$ circuits from time $1$ to estimate $\langle Z^{(i)} \rangle_l$. Instead we just run each of the $N_m$ circuits $S$ shots, meaning that for each circuit we perform a QND measurement of $Z^{(i)}$ at time $l$, continue running the circuit until the next measurement, and so forth. QND measurements ensure information encoded in $\rho_l$ is retained from one timestep to the next. This scheme requires $N_m S L$ applications of $T(u_l)$ but only $N_m S$ circuit runs as opposed to $N_m S L$ runs in the first scheme. We remark that a recent noisy quantum device is equipped with the qubit reset functionality \cite{pino2020demonstration}, and it will be interesting to implement this scheme in such a device in a future work.

The QR's convergence property (see Appendix~\ref{app-subsec:cv}) leads to more efficient versions of both schemes. Let $M \geq 1$ be a fixed integer and suppose that we want to estimate $\langle Z^{(i)} \rangle_l$ at a sufficiently large time $l$ (that depends on $\epsilon$, i.e., the rate of forgetting the initial condition). Suppose we initialize $N_m$ circuits in $|0\rangle^{\otimes n}$, re-apply and re-run $T(u_k)$ from $k=1$ as before. We then obtain the QR states $\rho_{l-M}$ at time $l-M$ and $\rho_l$ at time $l$. Thanks to the convergence property, we can instead re-initialize the $N_m$ circuits in $|0\rangle^{\otimes n}$ at time $l-M$ and from this time onwards re-apply and re-run $T(u_k)$ according to inputs $\{u_{l-M+1}, \ldots, u_l\}$. At time $l$, we have the corresponding QR state $\tilde{\rho}_l$. By the convergence property (see Appendix~\ref{app-sec:qnd} for the derivation), we can make the difference between $\rho_l$ and $\tilde{\rho}_l$ negligible by choosing $M$ appropriately based on $\epsilon$. If we perform repeated measurements on $\rho_l$ and $\tilde{\rho}_l$, the estimates of $\langle Z^{(i)} \rangle_l$ and $\langle \widetilde{Z^{(i)}} \rangle_l = {\rm Tr}(\tilde{\rho}_l Z^{(i)})$ will also be close; see Appendix~\ref{app-sec:mc-estimate}.

The convergence property can be readily exploited on current NISQ machines, leading to efficient versions of both schemes. The first scheme now requires $N_m S L$ circuit runs but only $N_m S M$ applications of $T(u_l)$. The second scheme now only needs $N_m S$ circuit runs and $N_m S M$ applications of $T(u_l)$, both are \textit{independent} of the input length $L$, enabling QR's potential for fast and scalable temporal processing. In all schemes, it is possible and perhaps advantageous to set $S=1$ and run $N_m$ circuits (possibly in parallel if multiple copies of the same hardware are available), for a sufficiently large $N_m$. The average of $N_m$ measurements estimates $\langle Z^{(i)} \rangle$, whose estimation accuracy increases as $N_m$ increases; see Appendix~\ref{app-sec:mc-estimate} for the analysis. Since qubit reset is not yet available on the IBM superconducting quantum devices, we employ the first implementation scheme in our proof-of-principle experiments. It will be a future work of interest to realize these more efficient protocols on gate-model quantum hardware.

\section{Proof-of-principle experiments} 
\label{sec:exp}
Five nonlinear tasks are chosen to carefully test different computational aspects of the QR proposal. Tasks~I-IV have the fading memory property. Tasks~\Romannum{1} and \Romannum{2} test the QR's ability to learn high-dimensional nonlinear maps. Both tasks are governed by linear dynamics determined by some matrix $A$ and have the same form of nonlinear output. The maximum singular value $\sigma_{\max}(A)$ determines the rate at which the dynamics forgets its initial condition while the sparsity of $A$ reflects the pairwise correlation of the reservoir state elements. Task~\Romannum{1} is described by a dense matrix $A$ with $\sigma_{\max}(A) = 0.5$ and Task~\Romannum{2} is governed by $A$ with $95\%$ sparsity with $\sigma_{\max}(A) = 0.99$. Task~\Romannum{3} tests the QR's ability to learn nonlinear maps governed by a highly nonlinear dynamics. Task~\Romannum{4} tests the short-term memory ability and Task~\Romannum{5} is a long-term memory map for testing the capability of the QR beyond its theoretical guarantee. For all experimental and numerical details, see Appendix~\ref{app-sec:exp}. 

We implement four distinct QRs from the subclass described in Sec.~\ref{subsec:subclass} on three IBM superconducting quantum processors \cite{IBMQ}. Each QR consists of a single subsystem ($N=1$ in Eq.~\eqref{eq:dynamics}) with a linear output function ($R=1$ in Eq.~\eqref{eq:readout}). Hereafter, we drop the subsystem index $k$. A 4-qubit and a 10-qubit QRs are implemented on the 20-qubit Boeblingen device; qubits with lower gate errors and longer coherence times are chosen. The 5-qubit Ourense and Vigo devices are used for two distinct 5-qubit QRs. These 5-qubit quantum devices admit simpler qubit couplings but lower gate errors than the 20-qubit Boeblingen device; see Appendix~\ref{app-subsec:hardware} for hardware specifications. Through comparison among the four QRs, we can investigate the impact of the size of QRs, the complexity of quantum circuits implementing the QR dynamics and the intrinsic hardware noise on the QRs' approximation performance.

\subsection{Quantum circuits for QRs} \label{subsec:qr-circuits}
We require the QRs to forget initial conditions for approximating fading memory maps. Traditionally, initial conditions are washed-out with a sufficiently long input sequence until reaching a steady state. Here we bypass the washout by choosing $\sigma = (|0 \rangle \langle 0|)^{\otimes n}$ and $U_0$ so that $|0\rangle^{\otimes n}$ is the steady state of Eq.~\eqref{eq:dynamics} under $u_l=1$, meaning that we can initialize the QR circuits in $|0 \rangle^{\otimes n}$. Furthermore, $U_0$ and $U_1$ should be different and hardware-efficient but sufficiently complex to produce non-trivial quantum dynamics. We choose a circuit schematics (also see Fig.~\ref{fig:2}(a) and (b)),
\begin{equation} \label{eq:qr-circuit}
\begin{split}
U_0(\boldsymbol{\theta}) & = \prod_{j=1}^{N_0} \left( U^{(j_t)}_3(\boldsymbol{\theta}_{j_t}) {\rm CX}_{j_cj_t} U^{(j_t)}_3(\boldsymbol{\theta}_{j_t})^\dagger \right), \\
U_1(\boldsymbol{\phi}) & = \bigotimes_{i=1}^{n} U^{(i)}_3(\boldsymbol{\phi}_{0_i})  \prod_{j=1}^{N_1}\left(\bigotimes_{i=1}^{n} U^{(i)}_3(\boldsymbol{\phi}_{j_i}) {\rm CX}_{j_cj_t} \right),
\end{split}
\end{equation}
where $\boldsymbol{\theta}_{j_t} = (\boldsymbol{\theta}^0_{j_t}, \boldsymbol{\theta}^1_{j_t}, \boldsymbol{\theta}^2_{j_t})$ and $\boldsymbol{\phi}_{j_i} = (\boldsymbol{\phi}^0_{j_i}, \boldsymbol{\phi}^1_{j_i}, \boldsymbol{\phi}^2_{j_i})$ are gate parameters, each independently and uniformly randomly sampled from $[-2\pi, 2\pi]$. Here $U^{(i)}_3$ is an arbitrary rotation on single qubit $i$ \cite{cross2017open} with inverse $U^{(j_t)}_3(\boldsymbol{\theta}_{j_t})^\dagger=U^{(j_t)}_3(-\boldsymbol{\theta}^0_{j_t}, -\boldsymbol{\theta}^2_{j_t}, -\boldsymbol{\theta}^1_{j_t})$, and ${\rm CX}_{j_c j_t}$ is the CNOT gate with control qubit $j_c$ and target qubit $j_t$. These quantum gates are native to the aforementioned IBM superconducting quantum processors, meaning that no further decomposition into simpler gates is required to implement these chosen gates  \cite{IBMQ}. The numbers of layers $N_0$ and $N_1$ are sufficiently large to couple all qubits linearly while respecting the coherence limits of these devices. Owing to the more flexible qubit couplings in the Boeblingen device, circuits implementing the 4-qubit and 10-qubit QRs have more gate and random parameters than the 5-qubit QRs'.

\begin{figure}[!ht]
\centering
\includegraphics[scale=1, trim={0 0 0 0}, clip]{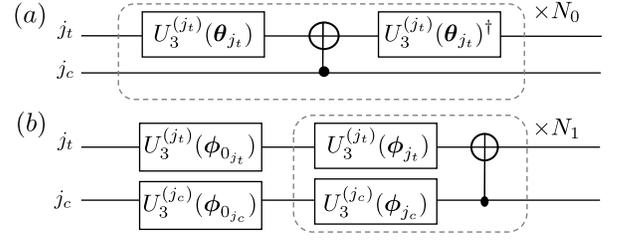}
\caption{Quantum circuit schematics for (a) $U_0(\boldsymbol{\theta})$ and (b) $U_1(\boldsymbol{\phi})$ employed in proof-of-principle experiments, described by Eq.~\eqref{eq:qr-circuit} in Sec.~\ref{subsec:qr-circuits}. Here $j_t$ and $j_c$ are the target and control qubits, respectively. The unitaries $U_0(\boldsymbol{\theta}), U_1(\boldsymbol{\phi})$ consist of $N_0, N_1$ layers of highlighted gate operations, with each layer acting on a different qubit pair ($j_t, j_c$).}
\label{fig:2}
\end{figure}

\begin{figure}[!ht]
\centering
\includegraphics[scale=1]{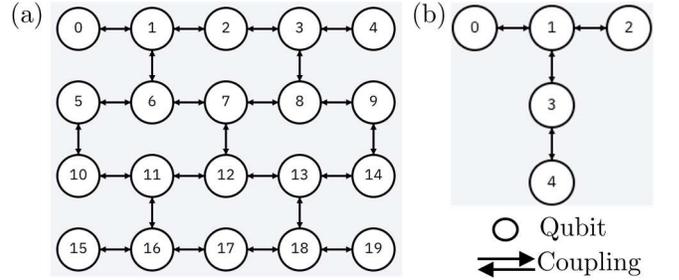}
\caption{Qubit coupling maps of the IBM superconducting quantum processors. (a) The 20-qubit Boeblingen device. (b) Both the 5-qubit Ourense and Vigo devices.}
\label{fig:3}
\end{figure}

For the 4-qubit and 10-qubit QRs on the Boeblingen device, we choose the number of layers $N_0 = N_1 = 5$ in Eq.~\eqref{eq:qr-circuit}. For the 5-qubit Ourense QR, we implement a simpler form of Eq.~\eqref{eq:qr-circuit}, given by
$$U_0 = \prod_{j=1}^{4} {\rm CX}_{j_cj_t}, \quad U_1(\boldsymbol{\phi}) = \bigotimes_{i=1}^{5} U^{(i)}_3(\boldsymbol{\phi}_{i}).$$

To implement a different QR dynamics on the 5-qubit Vigo device, we choose
\begin{equation*}
\begin{split}
& U_0(\boldsymbol{\theta}) = \prod_{j=1}^{3} \left( R^{(j_t)}_y(\boldsymbol{\theta}_{j_t}) {\rm CX}_{j_c j_t} R^{(j_t)}_{y}(\boldsymbol{\theta}_{j_t})^\dagger \right), \\
& U_1(\boldsymbol{\phi}) = \bigotimes_{i=1}^{5} R^{(i)}_x(\boldsymbol{\phi}_i).
\end{split}
\end{equation*}
Here $R^{(i)}_y$ and $R^{(i)}_x$ are rotational $Y$ and $X$ gates on qubit $i$, respectively. Both gates are special instances of the arbitrary single-qubit rotational gate $U^{(i)}_3$ with one (free) gate parameter while the other two being fixed constants. For all QRs, natively coupled control and target qubits for the CNOT gates are chosen, meaning that a CNOT gate can be directly applied to the qubit pair without additional gate operations. See Fig.~\ref{fig:3} for the device qubit coupling maps and Appendix~\ref{app-subsec:circuits} for the QR quantum circuit details.

\subsection{Experimental implementation}
\label{subsec:exp-im}
In this section we report on experiments demonstrating the first implementation scheme described in Sec.~\ref{sec:realization}. We choose a sufficiently large $N_m=1024$ and $\epsilon = 0.1$ for a moderate short-term memory. To estimate $\langle Z^{(i)} \rangle_l$ at time $l$, each of the $N_m$ circuits implementing the QRs on the Boeblingen device and the 5-qubit QRs are run for $S=1024$ and $S=8192$ shots, respectively. These shot numbers are chosen according to circuit execution times of the devices. 

\begin{figure*}[!ht]
\includegraphics[scale=1, trim={0 0 0 0}, clip]{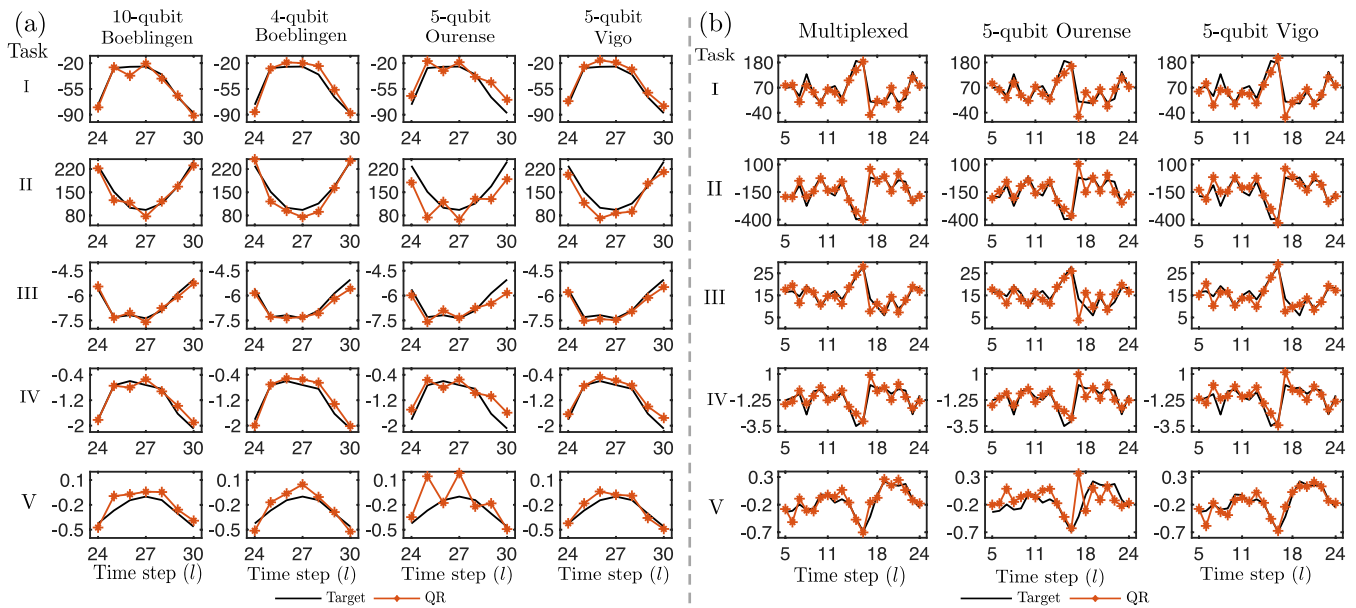}
\caption{(a) Shows the QRs' predicted outputs for the multi-step prediction problem, rows and columns correspond to different tasks and QRs, respectively. (d) Shows the QRs' predicted outputs for the map emulation problem, first column corresponds to the multiplexed QR.}
\label{fig:4}
\end{figure*}

We apply the four QRs to the five nonlinear tasks on the multi-step ahead prediction and map emulation problems. To implement the same washout as for the QRs for each target map, we inject a constant input sequence $u_l=1$ of length $50$ followed by train and test inputs uniformly randomly sampled from $u_l \in [0,1]$. This change in the input statistics leads to a transitory target output response. We remove the associated transients by discarding the first four target input-output data and the corresponding QR experimental data, see Appendix~\ref{app-subsec:data} for all data. For the multi-step ahead problem, train and test time steps run from $l=5$ to $L_T=23$ and $L_T+1=24 $ to $L=30$, respectively. For the map emulation problem, $K=2$ train input-output pairs running from $l=5$ to $L'=24$ are used, followed by one unseen test input-output pair with the same time steps. The number of train and test data in our proof-of-principle experiments is limited by the length of quantum circuits allowed on the IBM quantum processors. Furthermore, these cloud-based quantum processors are shared among users, making continuous experiments infeasible and durations of experiments lengthy. Yet our work indicates that despite these current limitations, NISQ devices can demonstrate learning of input-output maps and supports QR as a viable intermediate application of NISQ machines on the road to full-fledged quantum devices equipped with quantum error correction.

To harness the flexibility of the QR approach, a multi-tasking technique is used, in which the four QRs are evolved and the estimates of $\langle Z^{(i)} \rangle_l$ for all time steps are recorded once, whereas the readout parameters $\textbf{w}$ are optimized independently for each task. That is a fixed QR dynamics, with fixed gate parameter values, is exploited for multiple tasks simultaneously. We evaluate and compare the task performance of QRs using the normalized mean-squared error between prediction $\overline{y}\rvert_{L_T+1:L}$ and target $y\rvert_{L_T+1:L}$, computed as 
\vspace*{-0.5em}
$${\rm NMSE} = \sum_{l=L_T+1}^{L} |y_l - \overline{y}_l|^2  /  \Delta_y^2, \vspace*{-0.5em}$$
where $\mu = \frac{1}{L-L_T} \sum_{l=L_T+1}^{L} y_l$, $\Delta^2_y = \sum_{l=L_T+1}^{L}(y_l - \mu)^2$. While the success of experimental demonstration of hybrid quantum-classical algorithms often requires error mitigation techniques to reduce the effect of decoherence \cite{kandala2019error,li2017efficient}, we remark that our results are obtained without any process or readout error mitigation.

\subsection{QR task performance}
As the number of qubits increases, the 10-qubit Boeblingen QR is expected to perform better than other QRs. For the multi-step ahead prediction problem, we observe that two qubits in the 10-qubit Boeblingen QR experienced significant time-varying deviations between the experimental data and simulation results on the Qiskit simulator; see Appendix~\ref{app-subsec:measurement-data} for a discussion. To remedy this issue, {we set the corresponding elements of $\textbf{w}$ to be zeros. The resulting 10-qubit Boeblingen QR (with NMSE$<$0.08) outperforms other QRs with a smaller number of qubits on the first four tasks, and achieves an almost two-fold performance improvement on Tasks~\Romannum{2} and \Romannum{3}; see Table~\ref{table:multi-step-ahead} for all NMSEs on the multi-step ahead prediction problem. The 10-qubit Boeblingen QR predicted outputs follow the target outputs relatively closely as shown in Fig.~\ref{fig:4}(a). The 5-qubit Ourense QR admits very simple dynamics, whereas the 5-qubit Vigo QR has more gate operations and gate parameters. The 5-qubit Ourense QR is outperformed by the 5-qubit Vigo QR in all tasks. Considering that the Ourense and Vigo devices have similar noise characteristics and the same qubit coupling map, this suggests the QR performance can be improved by choosing a more complex quantum circuit, in the sense of having a longer gate sequence.

\begin{table}[!ht]
\centering
\small
\caption{NMSEs on the multi-step ahead prediction.}
\label{table:multi-step-ahead}
\scalebox{1}{
\setlength{\tabcolsep}{7pt}
\begin{tabular}{ c c c c c  } 
Task         & 10-qubit   & 4-qubit    & 5-qubit & 5-qubit  \\
             & Boeblingen & Boeblingen & Ourense & Vigo     \\ 
\hline
\Romannum{1} & 0.051     &  0.088      & 0.24    & 0.070        \\
\Romannum{2} & 0.072     &  0.12       & 0.68    & 0.22       \\
\Romannum{3} & 0.043     &  0.10       & 0.25    & 0.081        \\
\Romannum{4} & 0.079     &  0.092      & 0.34    & 0.11         \\
\Romannum{5} & 0.47      &  0.41       & 2.3     & 0.20         \\
\end{tabular} 
}
\end{table}

\begin{table}[!ht]
\centering
\small
\caption{NMSEs on the map emulation.}
\label{table:map-emulation}
\scalebox{1}{
\setlength{\tabcolsep}{11.6pt}
\begin{tabular}{  c  c c c } 
Task         & Multiplexed   & 5-qubit    & 5-qubit  \\
             & QR            & Ourense    & Vigo     \\ 
\hline
\Romannum{1} & 0.20          &  0.26      & 0.32     \\
\Romannum{2} & 0.13          &  0.27      & 0.23     \\
\Romannum{3} & 0.16          &  0.46      & 0.26     \\
\Romannum{4} & 0.25          &  0.30      & 0.36     \\
\Romannum{5} & 0.20          &  1.1       & 0.17     \\
\end{tabular} 
}
\end{table}

\begin{figure}[!ht]
\includegraphics[scale=1, trim={0 0 0 0}, clip]{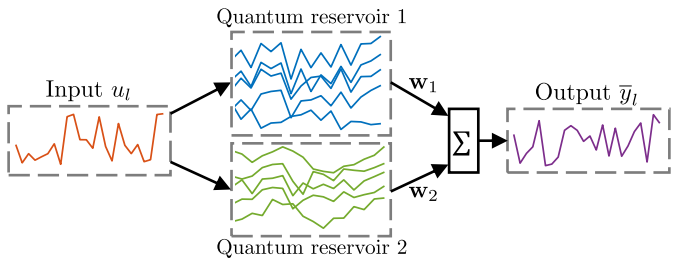}
\caption{The spatial multiplexing schematic. The same input sequence is injected into two distinct 5-qubit QRs. The internal states ${\rm Tr}(\rho_l Z^{(i)})$ of the two QRs are linearly combined to form a single output.}
\label{fig:5}
\end{figure}

The 10-qubit Boeblingen QR performs better on all tasks than the 5-qubit QRs except on Task~\Romannum{5}. This could be due to the impact of the higher noise level in the Boeblingen device and the fact that the output sequence is generated by a map that is not known to be fading memory, see Appendix~\ref{app-subsec:hardware} for the hardware specifications. Our universal class of QRs can exploit the property of spatial multiplexing as initially proposed in Ref.~\cite{nakajima2019boosting}; also see \cite{chen2019learning} and Fig.~\ref{fig:5} for an illustration. Outputs of distinct and non-interacting 5-qubit QRs can be combined linearly to harness the computational features of both members. Since the combined Ourense and Vigo devices have 10 qubits overall as with the 10-qubit Boeblingen QR but with lower noise levels, it would be meaningful to combine the 5-qubit Vigo and Ourense QRs via spatial multiplexing on the map emulation problem. The results of this multiplexing is summarized in Table~\ref{table:map-emulation}.

The combination of two 5-qubit QRs as discussed above achieves ${\rm NMSE}=0.20, 0.13, 0.16, 0.25, 0.20$ for the five tasks without any readout or process error mitigation. The predicted multiplexed QR outputs corresponding to the unseen inputs follow the target outputs relatively closely as shown in Fig.~\ref{fig:4}(b). Without spatial multiplexing, the 5-qubit Ourense or the 5-qubit Vigo QR show a worse performance in the first four tasks; see Table~\ref{table:map-emulation}. The spatial multiplexed 5-qubit QR combines computational features from the constituent QRs and can achieve comparable performance to the individual members as well as gaining an almost two-fold performance boost on Tasks~\Romannum{2} and \Romannum{3}. We anticipate that spatial multiplexing of QRs with more complex circuit structures and a larger number of qubits can lead to further performance improvements.

\section{Conclusion}
\label{sec:conclusion}
We propose a novel class of quantum reservoir computers endowed with universality property that is implementable on available noisy gate-model quantum hardware for temporal information processing. Our approach can harness arbitrary but fixed quantum circuits native to noisy quantum processors, without precise tuning of the circuit parameters. Our theoretical analysis is supported by proof-of-concept experiments on current superconducting quantum devices, demonstrating that small-scale noisy quantum reservoirs can perform non-trivial nonlinear temporal processing tasks under current hardware limitations, in the absence of readout and process error mitigation techniques. We also detail more efficient implementation schemes of our QR proposal that could enable QR's potential for fast and scalable temporal processing. It is a future work of interest to realize these more efficient protocols on quantum hardware. Our work indicates that quantum reservoir computing can serve as a viable intermediate application of NISQ devices on the road to full-fledged quantum computers.

Our approach is scalable in the number of qubits by offloading exponentially costly computations to noisy quantum systems and utilizing classical algorithms with a linear (in the number of qubits) computational cost to process sequential data. Moreover, when implemented on NISQ devices, the micro-second timescale for the evolution of the quantum reservoir suggests its potential for real-time fast signal processing tasks. Guided by our theory, we applied the spatial multiplexing technique initially proposed in \cite{nakajima2019boosting}, and demonstrate experimentally that exploiting distinct computational features of multiple small noisy quantum reservoirs can lead to a computational boost. As NISQ hardware becomes increasingly accessible and the noise level is continually reduced, we anticipate that the quantum reservoir approach will find useful applications in a broad range of scientific disciplines that employ time series modeling and analysis. We are also optimistic for useful applications to be possible even with a noise level above the threshold for continuous quantum error correction.

\section{Acknowledgments}
The authors thank Keisuke Fujii for an insightful discussion. NY is supported by the MEXT Quantum Leap Flagship Program Grant Number JPMXS0118067285.

\appendix
\section{Universality for approximating nonlinear fading memory maps}
\label{app-sec:universality}
We first define notation for the rest of this section. Let $K([0, 1])$ be the set of infinite sequences $u = \{ \ldots, u_{-1}, u_{0}, u_{1}, \ldots \}$ such that $u_l \in [0, 1]$ for all $l \in \mathbb{Z}$. Let $K^{+}([0, 1])$ and $K^{-}([0, 1])$ be subsets of $K([0, 1])$ for which the indices are restricted to $\mathbb{Z}^{+} = \{1, 2, \ldots\}$ and $\mathbb{Z}^{-} = \{\ldots, -2, -1, 0\}$, respectively. For any complex matrix $A$, $\| A \|_{p} = {\rm Tr}(\sqrt{A^\dagger A}^{p})^{1/p}$ is the Schatten $p$-norm for some $p \in [1, \infty)$. For any operator $T$, the induced operator norm is $\|T\|_{p-p} = \sup_{A \in \mathbb{C}^{n \times n}, \|A\|_p =1} \| T(A) \|_p$. Let $\mathcal{D}(2^n)$ denotes the set of $2^n \times 2^n$ density operators.

Consider an input-output map $M$ that maps an infinite input sequence $u \in K([0, 1])$ to a real infinite output sequence $y \in K(\mathbb{R})$. We say that $M$ is $w$-fading memory if there exists a decreasing sequence $w = \{w_0, w_1, \ldots\}$ with $\lim_{l \rightarrow \infty} w_{l} = 0$, such that for any $u ,v \in K^{-}([0, 1])$, we have $|M(u)_0 - M(v)_0| \rightarrow 0$ whenever $\sup_{l \in \mathbb{Z}^{-}}|w_{-l}(u_l - v_l)|  \rightarrow 0$. Here $M(u)_l = y_l$ is the output sequence at time $l$. We also require $M$ to be causal and time-invariant as in Ref.~\cite{chen2019learning}, meaning that the output of $M$ at time $l$ only depends on the input up to and including time $l$, and its outputs are invariant under time-shifts. Now we are interested in approximating $M$ with a time-invariant fading memory map $\overline{M}$ produced by a quantum reservoir computer.

\subsection{The convergence property}
\label{app-subsec:cv}
Since $M$ is fading memory, the map $\overline{M}$ must also forget its initial condition $\rho_0$. This is the convergence property \cite{PWN05} or the echo-state property \cite{JH04}. We give a precise definition here.
\begin{definition}[Convergence]
An input-dependent CPTP map $T$ is convergent with respect to input $u \in K([0, 1])$ if there exists a sequence $\{\delta_l; l \geq 0\}$ with $\delta_l > 0$ and $\lim_{l \rightarrow \infty} \delta_l = 0$ such that for all $u \in K^{+}([0, 1])$, for any two density operators $\rho_{j, l}$ $(j= 1, 2)$ satisfying $\rho_{j, l} = T(u_l) \rho_{j, l-1}$, it holds that $\| \rho_{1, l} - \rho_{2, l} \|_1 \leq \delta_l$. If a QR dynamic $T$ is convergent, we call the QR a convergent system.
\end{definition}

\begin{lemma}
The QR dynamics given by Eqs.~\eqref{eq:dynamics} and \eqref{eq:dynamics-subsystem} is convergent with respect to inputs $u \in K([0,1])$.
\end{lemma}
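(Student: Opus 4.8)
The plan is to show that, uniformly over inputs, a single application of the QR map $T(u_l)$ strictly contracts the trace-norm distance between any two states, and then to iterate this bound. Writing $\Delta_{l-1} = \rho_{1,l-1} - \rho_{2,l-1}$, the linearity of $T(u_l)$ (the reset terms $\epsilon_k\sigma_k$ cancel in the difference) gives $\Delta_l = T(u_l)\Delta_{l-1}$, so it suffices to bound $\|T(u_l)\Delta\|_1$ for an arbitrary traceless Hermitian operator $\Delta$ (the difference of two density operators) by a constant strictly less than $1$ that is independent of $u_l$.

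First I would observe that for each $u_l \in [0,1]$ the map $\mathcal{E}^{(k)}(u_l) := u_l T^{(k)}_0 + (1-u_l) T^{(k)}_1$ is a convex combination of CPTP maps and is therefore itself CPTP. This lets me rewrite the subsystem dynamics Eq.~\eqref{eq:dynamics-subsystem} as $T^{(k)}(u_l) = (1-\epsilon_k)\mathcal{E}^{(k)}(u_l) + \epsilon_k \mathcal{R}_{\sigma_k}$, where $\mathcal{R}_{\sigma_k}(X) = \sigma_k\,{\rm Tr}(X)$ is the replacer channel that discards its input and outputs $\sigma_k$.

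The key step is to handle the tensor-product map $T(u_l) = \bigotimes_{k=1}^N T^{(k)}(u_l)$ acting on possibly entangled states. Expanding the product and isolating the single term in which every subsystem is replaced, I obtain the convex decomposition $T(u_l) = (1-p)\,\mathcal{A}(u_l) + p\,\mathcal{R}_\sigma$, where $p = \prod_{k=1}^N \epsilon_k \in (0,1]$, $\mathcal{R}_\sigma$ is the global replacer to $\bigotimes_k \sigma_k$, and $\mathcal{A}(u_l)$ (the normalized sum of all remaining terms, each a tensor product of CPTP maps) is again CPTP. Since $\mathcal{R}_\sigma$ annihilates any traceless operator, $T(u_l)\Delta = (1-p)\,\mathcal{A}(u_l)\Delta$, and using that CPTP maps are trace-norm contractions I get $\|\Delta_l\|_1 = \|T(u_l)\Delta_{l-1}\|_1 \le (1-p)\|\Delta_{l-1}\|_1$ uniformly in $u_l$.

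Iterating and using $\|\rho_{1,0}-\rho_{2,0}\|_1 \le 2$ then yields $\|\rho_{1,l}-\rho_{2,l}\|_1 \le 2(1-p)^l =: \delta_l$, with $\delta_l \to 0$ because $0 < p \le 1$, establishing convergence with a rate governed by $\prod_k \epsilon_k$. I expect the main obstacle to be precisely the entangled-initial-state case: a naive per-subsystem estimate would simply tensor the individual contraction factors $(1-\epsilon_k)$, but this does not directly control the full-system trace norm when $\Delta$ is not a product operator. The convex decomposition above circumvents this difficulty, because only the all-subsystems-reset component, which carries the strictly positive weight $p$, is needed to annihilate the traceless part, while every remaining component is merely nonexpansive.
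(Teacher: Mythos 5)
Your argument is correct, and for the multi-subsystem step it takes a genuinely different route from the paper. The single-subsystem contraction is the same in both: the reset term drops out of the difference and contractivity of CPTP maps under the trace norm gives the factor $(1-\epsilon_k)$. For the tensor product $T(u_l)=\bigotimes_k T^{(k)}(u_l)$, the paper does \emph{not} attempt to handle entangled initial states: it explicitly restricts to product initial states (consistent with Eq.~\eqref{eq:dynamics}, which only defines the dynamics on product states) and uses a telescoping triangle inequality, replacing one tensor factor at a time, to obtain $\|\rho_{1,L}\otimes\sigma_{1,L}-\rho_{2,L}\otimes\sigma_{2,L}\|_1\le \sum_k 2(1-\epsilon_k)^L$. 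You instead rewrite each factor with the replacer channel, expand the tensor product over subsets of reset subsystems, and isolate the all-reset component $p\,\mathcal{R}_\sigma$ with $p=\prod_k\epsilon_k$, which annihilates the traceless difference; this yields the uniform contraction $\|\Delta_l\|_1\le(1-p)\|\Delta_{l-1}\|_1$ for \emph{arbitrary}, possibly entangled, joint initial states. What each approach buys: yours is strictly more general in the admissible initial conditions and gives a single clean one-step contraction factor, but the rate $2(1-\prod_k\epsilon_k)^L$ is much weaker than the paper's $\sum_k 2(1-\epsilon_k)^L\le 2N\max_k(1-\epsilon_k)^L$, since $1-\prod_k\epsilon_k\ge\max_k(1-\epsilon_k)$; both bounds vanish as $L\to\infty$, so either suffices for the lemma. (A cosmetic point: when $p=1$ your map $\mathcal{A}$ is undefined, but the bound $(1-p)\|\Delta\|_1=0$ still holds trivially.)
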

\begin{proof}
First we show each subsystem governed by Eq.~\eqref{eq:dynamics-subsystem} is convergent. For any $\rho, \sigma \in \mathcal{D}(2^n)$, $u_l \in [0, 1]$ and $\epsilon_k \in (0, 1]$, we have
\begin{equation}
\label{eq:cv}
\begin{split}
& \quad  \| T^{(k)}(u_l)(\rho - \sigma) \|_{1} \\
& = (1-\epsilon_k) \left\| \left(u_l T^{(k)}_0 + (1-u_l)T^{(k)}_1 \right) (\rho -\sigma) \right\|_1 \\
& \leq (1-\epsilon_k) \| \rho -\sigma\|_1  \leq 2 (1-\epsilon_k) ,
\end{split}
\end{equation}
where the first inequality follows from \cite[Theorem 9.2]{NC10} and the convex combination $u_l T^{(k)}_0 + (1-u_l)T^{(k)}_1$ is again a CPTP map. Now let $\rho_{1, 0}$ and $\rho_{2, 0}$ be two arbitrary initial density operators, using the inequality Eq.~\eqref{eq:cv} $L$ times, we have
\begin{equation*}
\begin{split}
& \| \rho_{1, L} - \rho_{2, L} \|_1 \\
& = \left\| \left( \overleftarrow{\prod}_{l=1}^{L} T^{(k)}(u_l) \right) (\rho_{1, 0} - \rho_{2, 0})  \right\|_1 \\
& \leq (1- \epsilon_k)^L \left\| \rho_{1, 0} - \rho_{2, 0} \right\|_1 \leq 2(1-\epsilon_k)^L,
\end{split}
\end{equation*}
where $\overleftarrow{\prod}_{l=1}^{L} T^{(k)}(u_l)$ is the time-composition of $T^{(k)}(u_l)$ from right to left. 

Secondly, we show that the QR dynamics Eq.~\eqref{eq:dynamics} is convergent by showing that $T(u_l) = \bigotimes_{k=1}^{N} T^{(k)}(u_l)$ is again convergent when the subsystems are initialized in a product state. We apply the same argument as in \cite[Lemma 5]{chen2019correction}. Consider two CPTP maps $T^{(1)}(u_l)$ and $T^{(2)}(u_l)$ of the form Eq.~\eqref{eq:dynamics-subsystem}. Let $\rho_{1, 0} \otimes \sigma_{1, 0}$ and $\rho_{2, 0} \otimes \sigma_{2, 0}$ be two arbitrary initial product states. Then $T^{(1)}(u_l) \otimes T^{(2)}(u_l)$ is again convergent with respect to all $u \in K([0, 1])$, as shown in the following,
\begin{equation*}
\begin{split}
& \ \| \rho_{1, L} \otimes \sigma_{1, L} - \rho_{2, L} \otimes \sigma_{2, L} \|_{1} \\
& \hspace*{-1em} \leq \left\| \left(\overleftarrow{\prod}_{l=1}^{L} T^{(1)}(u_l) \otimes T^{(2)}(u_l) \right) (\rho_{1, 0} \otimes \sigma_{1, 0} - \rho_{2, 0} \otimes \sigma_{1, 0}) \right\|_{1}  \\
& \hspace*{-0.7em} + \left\|\left(  \overleftarrow{\prod}_{l=1}^{L} T^{(1)}(u_l) \otimes T^{(2)}(u_l) \right) (\rho_{2, 0} \otimes \sigma_{1, 0} - \rho_{2, 0} \otimes \sigma_{2, 0}) \right\|_1 \\
& \hspace*{-1em} = \left\| \left( \overleftarrow{\prod}_{l=1}^{L}  T^{(1)}(u_l) \right)(\rho_{1, 0} - \rho_{2, 0}) \right\|_{1} \left\| \sigma_{1, L} \right\|_{1} \\
& \hspace*{-0.7em} + \left\| \left( \overleftarrow{\prod}_{l=1}^{L}  T^{(2)}(u_l) \right) (\sigma_{1, 0} - \sigma_{2, 0}) \right\|_1  \left\| \rho_{2, L} \right\|_1 \\ 
& \hspace*{-1em} \leq 2(1-\epsilon_1)^L + 2(1-\epsilon_2)^L.
\end{split}
\end{equation*}
Repeating this argument $N$ times shows the QR dynamics $T(u_l) = \bigotimes_{k=1}^{N} T^{(k)}(u_l)$ is again convergent.
\end{proof}

\subsection{The fading memory property}
\label{app-subsec:fmp}
Associate the readout function Eq.~\eqref{eq:readout} to the QR dynamics Eqs.~\eqref{eq:dynamics} and \eqref{eq:dynamics-subsystem}. This defines an input-output (I/O) map $\overline{M}_{(T, h_{\textbf{w}})}$.This I/O map is causal, meaning that the its output $\overline{y}_l$ only depends on $u_{l'}$ for $l' \leq l$. Furthermore, it is time-invariant, meaning that TODO:DEFN HERE.

$\overline{y}_{\tau + l} = \overline{M}_{(T, h_\textbf{w})}(S_\tau(u))_l$ for all $\tau \in \mathbb{Z}$, where $S_\tau(u) = \{\ldots, u_{\tau-1}, u_{\tau}, u_{\tau+1} \ldots\}$ shifts the input sequence by $\tau$. By causality and time-invariance, it suffices to consider the outputs $\overline{y}_l$ of $\overline{M}_{(T, h_\textbf{w})}(u)_l$ for $l \leq 0$ and left-infinite inputs $u \in K^{-}([0, 1])$; see \cite{BC85,grigoryeva2018universal,chen2019learning} for details.

For any $u \in K^{-}([0, 1])$ and any initial condition $\rho_{-\infty}$,
\begin{equation*}
\overline{M}_{(T, h_{\textbf{w}})}(u)_0 = h_\textbf{w} \left( \left( \overrightarrow{\prod}_{j=0}^{\infty} T(u_{-j})\right) \rho_{-\infty}  \right),
\end{equation*}
where $\overrightarrow{\prod}_{j=0}^{\infty} T(u_{-j}) = \lim_{N \rightarrow \infty} T(u_0) \cdots T(u_{l-N})$ and the limit is point-wise. We can restate the fading memory property in terms of continuity of $\overline{M}_{(T, h_\textbf{w})}$ with respect to a certain norm. Given a null sequence $w$ (i.e., a decreasing sequence $w$ with $\lim_{l \rightarrow \infty} w_l = 0$) and any $u \in K^{-}([0, 1])$, define a weighted norm $\| u \|_{w} = \sup_{l \in \mathbb{Z}^{-}}|u_{l}|w_{-l}$. The map $\overline{M}_{(T, h_\textbf{w})}$ is $w$-fading memory if it is continuous in $(K^{-}([0, 1]), \| \cdot \|_w)$.

\begin{definition}[Fading memory]
Given a null sequence $w$, the set of $w$-fading memory maps is the set of all continuous functions $C(K^{-}([0, 1]), \| \cdot \|_w)$ defined on $(K^{-}([0, 1]), \| \cdot \|_w)$.
\end{definition}

\begin{lemma}
For any null sequence $w$, $\overline{M}_{(T, h_{\textbf{w}})}$ induced by QR described by Eqs.~\eqref{eq:dynamics}--\eqref{eq:readout} is $w$-fading memory.
\end{lemma}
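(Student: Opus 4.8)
The plan is to show that $\overline{M}_{(T,h_\textbf{w})}$ is continuous on $(K^{-}([0,1]), \|\cdot\|_w)$, which by the preceding definition is exactly the $w$-fading memory property. I would first factor the map through the state at time $0$, writing $\overline{M}_{(T,h_\textbf{w})}(u)_0 = h_\textbf{w}(\Phi(u))$ with $\Phi(u) = (\overrightarrow{\prod}_{j=0}^{\infty} T(u_{-j}))\rho_{-\infty}$; by the convergence lemma established above this limit exists and is independent of $\rho_{-\infty}$. Since $h_\textbf{w}$ in Eq.~\eqref{eq:readout} is a fixed polynomial in the expectations $\langle Z^{(i)}\rangle_0 = \mathrm{Tr}(\rho_0 Z^{(i)})$, each of which is a $\|\cdot\|_1$-continuous linear functional of $\rho_0$, the readout is continuous (indeed uniformly continuous) in $\rho_0$ on the compact set $\mathcal{D}(2^n)$. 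It therefore suffices to prove that the state map $\Phi : (K^{-}([0,1]), \|\cdot\|_w) \to (\mathcal{D}(2^n), \|\cdot\|_1)$ is continuous, after which the composition is continuous.

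For this I would assemble two ingredients. The first is a one-step Lipschitz estimate obtained by working subsystem-wise (the state stays a product state under Eqs.~\eqref{eq:dynamics}--\eqref{eq:dynamics-subsystem}): one has $\|T^{(k)}(u_l)\rho - T^{(k)}(v_l)\rho\|_1 = (1-\epsilon_k)|u_l - v_l|\,\|(T^{(k)}_0 - T^{(k)}_1)\rho\|_1 \leq 2|u_l - v_l|$, since the difference of two CPTP maps applied to a density operator has trace norm at most $2$. The second is the finite-horizon truncation bound from the convergence lemma: replacing the infinite history by the last $L$ inputs, with any fixed seed state, changes $\Phi(u)$ by at most $2\sum_{k}(1-\epsilon_k)^L$, which tends to $0$ as $L \to \infty$ uniformly over all inputs.

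With these in hand I would run a standard three-term argument. Fix $\eta > 0$, and choose $L$ so large that the truncation error is below $\eta/3$ for both $u$ and $v$. On the retained finite block $l = -L+1, \ldots, 0$ the weights $w_{-l}$ are bounded below by $w_{L-1} > 0$, so $\|u - v\|_w < \xi$ forces $|u_l - v_l| < \xi/w_{L-1}$ for each such $l$; telescoping the one-step Lipschitz estimate over the $L$ factors—where the strict contraction $(1-\epsilon_k) < 1$ of the intervening maps sums to a convergent geometric series and so keeps the accumulated sensitivity bounded independently of $L$—bounds the finite-horizon discrepancy by a constant times $\xi/w_{L-1}$. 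Choosing $\xi$ small enough drives this middle term below $\eta/3$, and the triangle inequality then gives $\|\Phi(u) - \Phi(v)\|_1 < \eta$. Continuity of $\Phi$, and hence of $\overline{M}_{(T,h_\textbf{w})}$, follows.

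The main obstacle I anticipate is the careful bookkeeping of the infinite-to-finite reduction: one must check that the truncation bound is genuinely uniform in the inputs, so that a single $L$ works everywhere, and that the weight sequence $w$, being decreasing with strictly positive terms, is bounded away from zero on every finite block, so that $\|\cdot\|_w$-smallness really controls the individual retained inputs. The contraction $(1-\epsilon_k) < 1$ is the crucial structural feature here, since it both supplies the uniform truncation estimate and prevents the accumulated Lipschitz constant over the finite block from growing uncontrollably.
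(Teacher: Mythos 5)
Your proposal is correct and follows essentially the same route as the paper: the decisive estimate in both is the Lipschitz continuity of the input-dependent channel, $\| T^{(k)}(x) - T^{(k)}(y) \|_{1-1} \leq 2(1-\epsilon_k)|x-y|$, combined with the contraction factor $(1-\epsilon_k)$ supplied by the convergence lemma. The only difference is that the paper delegates the reduction from this continuity to the fading memory property to a citation of \cite[Lemma~3]{chen2019learning}, whereas you carry out that reduction explicitly via the standard truncation-plus-telescoping argument, which is exactly what the cited lemma encapsulates.
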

\begin{proof}
Using the same argument in \cite[Lemma~3]{chen2019learning}, it follows that $\overline{M}_{(T, h_{\textbf{w}})}$ is $w$-fading memory if each $k$-th subsystem dynamics $T^{(k)}(u_l)$ is continuous with respect to the inputs $u_l \in [0, 1]$ for all $k=1, \ldots, N$. If fact, we show that $T^{(k)}(u_l)$ is uniformly continuous. Let $x, y \in [0, 1]$ and $A \in \mathbb{C}^{2^{n_k} \times 2^{n_k}}$,
\begin{equation*}
\vspace*{0.1em}
\begin{split}
& \hspace{1.5em} \| T^{(k)}(x) - T^{(k)}(y) \|_{1-1}  \\
& = \sup_{A \in \mathbb{C}^{2^{n_k} \times 2^{n_k}}, \|A\|_1 = 1} \left\|\left(T^{(k)}(x) - T^{(k)}(y)\right) A \right\|_1 \\
& = (1-\epsilon_k)|x-y| \sup_{A \in \mathbb{C}^{2^{n_k} \times 2^{n_k}}, \|A\|_1 = 1} \left\|T_0^{(k)}(A) - T_{1}^{(k)}(A) \right\|_1 \\ 
& \leq (1-\epsilon_k) |x-y| \left( \left\| T^{(k)}_0 \right\|_{1-1}  + \left\|T^{(k)}_1 \right\|_{1-1} \right) \\
& \leq 2(1-\epsilon_k)|x-y|,
\end{split}
\end{equation*}
where the last inequality follows from \cite[Theorem~2.1]{perez2006contractivity}. We remark that \cite[Lemma~3]{chen2019learning} is stated with respect to the Schatten $p=2$ norm, but the same argument holds for Schatten $p=1$ norm.
\end{proof}

\subsection{The universality property}
\label{app-subsec:universal}
Now consider the family $\mathcal{M}$ of maps $\overline{M}_{(T, h_{\textbf{w}})}$. We state our main universality result.
\begin{theorem}[Universality] \label{theorem:universal}
For any null sequence $w$, the QR class $\mathcal{M}$ is dense in $C(K^{-}([0, 1]), \| \cdot \|_w)$. That is, given any $w$-fading memory map $M \in C(K^{-}([0, 1]), \| \cdot \|_w)$ and any $\delta > 0$, there exists $\overline{M}_{(T, h_\textbf{w})} \in \mathcal{M}$ such that for all $u \in K^{-}([0, 1])$, $\sup_{l \in \mathbb{Z}^{-}} |M(u)_l - \overline{M}_{(T, h_\textbf{w})}(u)_l | < \delta$.
\end{theorem}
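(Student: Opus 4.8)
The plan is to invoke the Stone--Weierstrass theorem, exactly as flagged in the main text, and the work lies in verifying its hypotheses for the class $\mathcal{M}$ on the correct space. I would first record that $(K^{-}([0,1]), \|\cdot\|_w)$ is a compact metric space: since $[0,1]^{\mathbb{Z}^{-}}$ is compact in the product topology by Tychonoff's theorem, and on this uniformly bounded set the topology induced by $\|\cdot\|_w$ coincides with the product topology (because $w$ is a null sequence, the tail $\sup_{m \geq L} w_m$ controls the contribution of far-past coordinates while the finite head is handled by pointwise convergence), compactness carries over; this is the standard setting of \cite{BC85,grigoryeva2018universal,chen2019learning}. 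By Lemma~2 every $\overline{M}_{(T, h_\textbf{w})}$ is $w$-fading memory, so $\mathcal{M} \subseteq C(K^{-}([0,1]), \|\cdot\|_w)$, and it remains to show that $\mathcal{M}$ is a point-separating subalgebra containing the constants.

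The subalgebra properties follow from the flexibility built into the class. Constants lie in $\mathcal{M}$ by setting every weight except $\textbf{w}_c$ to zero in Eq.~\eqref{eq:readout}, and scalar multiples come from rescaling the weights. For sums and products I would exploit the multiplexing structure of Eq.~\eqref{eq:dynamics}: given $\overline{M}_1, \overline{M}_2 \in \mathcal{M}$ realized by two QRs, I form a single QR whose subsystems are the disjoint union of the two, initialized in a product state. Lemma~1 guarantees the combined dynamics remains convergent, and because the subsystems do not interact the joint expectations of Pauli operators supported on distinct subsystems factorize. Hence both $\overline{M}_1 + \overline{M}_2$ and $\overline{M}_1 \overline{M}_2$ are again polynomials in the combined observables $\langle Z^{(i)} \rangle_l$ --- of degree at most $R_1 + R_2$ in the product case --- and are therefore realized by a readout of the form Eq.~\eqref{eq:readout} on the enlarged QR with a suitably large $R$. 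Thus $\mathcal{M}$ is closed under the algebra operations.

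The main obstacle is separation of points. Given $u \neq v$ in $K^{-}([0,1])$, let $-k$ be the most recent time at which they differ, so $u_{-m} = v_{-m}$ for $m < k$ while $u_{-k} \neq v_{-k}$. Writing $A(s) = (1-\epsilon)(sT_0 + (1-s)T_1)$ for a single-subsystem QR, the convergent state unrolls as $\rho_0 = \epsilon \sum_{j \geq 0} A(u_0) \cdots A(u_{-(j-1)})\, \sigma$. The shared recent inputs factor out a common prefix $P$, and since $A(u_{-k}) - A(v_{-k}) = (1-\epsilon)(u_{-k}-v_{-k})(T_0 - T_1)$ is nonzero whenever $T_0 \neq T_1$, the difference $\rho_0(u) - \rho_0(v)$ carries an explicit nonvanishing factor $(T_0 - T_1)$. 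The delicate point is that the remaining infinite sum could in principle cancel this contribution for some specific parameter choice; I would rule this out by observing that $\langle Z \rangle_0(u) - \langle Z \rangle_0(v)$ is a real-analytic, not identically zero, function of the free parameters $(\epsilon, T_0, T_1, \sigma)$ --- transparently so as $\epsilon \to 1$, where the leading surviving term is of order $(1-\epsilon)^{k+1}$ and proportional to $\mathrm{Tr}(Z\, P\, (T_0-T_1)\sigma')$. Since $\mathcal{M}$ ranges over all such parameters and over all Pauli readouts, some member attains a nonzero value and separates $u$ from $v$.

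With compactness, the subalgebra structure, the constants, and point separation all in hand, the real Stone--Weierstrass theorem \cite[Theorem~7.3.1]{Dieudonne13} yields density of $\mathcal{M}$ in $C(K^{-}([0,1]), \|\cdot\|_w)$, which is the claim. I expect the separation step to be the crux: proving non-cancellation rigorously, rather than merely at leading order in $1-\epsilon$, is where the technical care concentrates, and is presumably where the argument of \cite{chen2019learning} is reused.
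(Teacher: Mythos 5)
Your overall architecture coincides with the paper's: Stone--Weierstrass on the compact metric space $(K^{-}([0,1]),\|\cdot\|_w)$, the algebra structure obtained by tensoring non-interacting subsystems (so that expectations of Pauli observables factorize and sums and products of readout polynomials are again readouts of the form Eq.~\eqref{eq:readout} on the enlarged reservoir, with larger degree $R$), and constants from $\textbf{w}_c$. The gap is exactly where you suspect it: separation of points. Your argument reduces non-cancellation to the claim that $\langle Z\rangle_0(u)-\langle Z\rangle_0(v)$ is a not-identically-zero analytic function of the reservoir parameters, with leading term in powers of $1-\epsilon$ proportional to ${\rm Tr}\bigl(Z\,P\,(T_0-T_1)\sigma\bigr)$. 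But you never exhibit a choice making this coefficient nonzero, and for natural choices it can vanish: with $T_0$ the identity channel, $T_1$ conjugation by $X$, and $\sigma=|0\rangle\langle 0|$, the prefix $P$ multiplies the relevant $Z$-component by $\prod_{m}(2u_{-m}-1)$, which is zero whenever any intermediate input equals $1/2$. So the ``transparent'' leading-order argument fails for some pairs $u,v$, and the analyticity-in-all-parameters claim is left without a witness of non-vanishing.

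The paper closes this step by an explicit construction rather than a genericity argument: a single system qubit coupled to an ancilla through $e^{-iH}$ with the Ising-type $H=J(XX+YY)+\alpha\sum Z$, the ancilla prepared in $|0\rangle\langle 0|$ or $|1\rangle\langle 1|$ with probabilities $u_l$ and $1-u_l$ and then traced out, plus the $\epsilon$-reset to $I/2$. The resulting transfer matrix, Eq.~\eqref{eq:T-matrix}, carries its input dependence only in the $Z\!\leftarrow\!I$ entry, linearly as $(1-\epsilon)\sin^2(2J)(2u_l-1)$, while the $Z\!\leftarrow\!Z$ entry is the input-independent constant $(1-\epsilon)\cos^2(2J)$. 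Hence $\overline{M}_{(T,h_\textbf{w})}(u)_0-\overline{M}_{(T,h_\textbf{w})}(v)_0 = 2\textbf{w}_1\sin^2(2J)(1-\epsilon)\sum_{j\ge 0}\theta^j(u_{-j}-v_{-j})$ with $\theta=(1-\epsilon)\cos^2(2J)$: a single-variable power series whose coefficients are exactly $u_{-j}-v_{-j}$, hence not the zero series whenever $u\neq v$. If $u_0\neq v_0$ one takes $\theta=0$; otherwise $f(0)=0$ and the isolated-zeros theorem for analytic functions yields some small $\theta\neq 0$ at which the series is nonzero, attainable by choosing $J$ and $\epsilon$. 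To complete your proof you would need either to reproduce such an explicit reservoir or to supply a rigorous non-vanishing argument for your general one; as written the separation step is not established.
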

We apply the Stone-Weierstrass Theorem to show that $\mathcal{M}$ is dense in $C(K^{-}([0, 1]), \| \cdot \|_{w})$ . It has been shown that the space $(K^{-}([0, 1]), \| \cdot \|_w)$ is a compact metric space \cite[Lemma~2]{grigoryeva2018universal}. We now state the Stone-Weierstrass Theorem.

\begin{theorem}[Stone-Weierstrass]
Let $E$ be a compact metric space and $C(E)$ be the set of real-valued continuous functions defined on $E$. If a subalgebra $A$ of $C(E)$ contains the constant functions and separates points of $E$, then $A$ is dense in $C(E)$.
\end{theorem}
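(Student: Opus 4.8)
The plan is to show that the uniform closure $\overline{A}$ of $A$ in the sup-norm coincides with $C(E)$. First I would record that $\overline{A}$ is again a subalgebra that contains the constants and separates points: the latter two properties are inherited immediately from $A \subseteq \overline{A}$, while the subalgebra property follows because uniform limits respect the algebra operations on the compact space $E$, where every continuous function is bounded (so products of uniformly convergent sequences converge uniformly). The entire proof then rests on upgrading $\overline{A}$ from an algebra to a \emph{lattice}: I claim that $f, g \in \overline{A}$ implies $\max(f,g), \min(f,g) \in \overline{A}$. Via the identities $\max(f,g) = \tfrac12(f+g+|f-g|)$ and $\min(f,g) = \tfrac12(f+g-|f-g|)$, this reduces to the single assertion that $f \in \overline{A}$ implies $|f| \in \overline{A}$.

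The crux of the argument, and the only place where genuine analysis enters, is this closure under absolute value. Set $M = \|f\|_\infty$, finite since $E$ is compact, and approximate $t \mapsto |t|$ uniformly on $[-M, M]$ by polynomials. The cleanest self-contained route is to build polynomials converging uniformly to $\sqrt{s}$ on $[0,1]$ via the monotone recursion $p_0 = 0$, $p_{n+1}(s) = p_n(s) + \tfrac12\big(s - p_n(s)^2\big)$; one checks inductively from $\sqrt{s}-p_{n+1} = (\sqrt{s}-p_n)\big(1-\tfrac12(\sqrt{s}+p_n)\big)$ that $0 \le p_n \le p_{n+1} \le \sqrt{s}$ on $[0,1]$, so $p_n \uparrow \sqrt{s}$ pointwise, and Dini's theorem promotes this to uniform convergence on the compact interval. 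Writing $|t| = M\sqrt{(t/M)^2}$ yields polynomials $q_n$ with $q_n(t) \to |t|$ uniformly on $[-M,M]$. Because $\overline{A}$ is an algebra containing the constants, $q_n(f) \in \overline{A}$, and the uniform convergence $q_n(f) \to |f|$ gives $|f| \in \overline{A}$.

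With the lattice property in hand, the remainder is a two-stage compactness argument. Using separation of points and the constants, for any target $f \in C(E)$ and any two points $x, y \in E$ I can exhibit $g_{x,y} \in \overline{A}$ with $g_{x,y}(x) = f(x)$ and $g_{x,y}(y) = f(y)$: if $h \in A$ separates $x$ from $y$, take the affine combination $g_{x,y} = f(x) + \big(f(y)-f(x)\big)\frac{h - h(x)}{h(y)-h(x)}$ (and for $x=y$ the constant $f(x)$). Now fix $\epsilon > 0$. Holding $x$ fixed and varying $y$, continuity of $g_{x,y}-f$ supplies for each $y$ a neighborhood on which $g_{x,y} < f + \epsilon$; a finite subcover and the minimum of the corresponding functions give $g_x \in \overline{A}$ with $g_x < f + \epsilon$ everywhere and $g_x(x) = f(x)$. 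Varying $x$, continuity and $g_x(x)=f(x)$ supply for each $x$ a neighborhood on which $g_x > f - \epsilon$; a finite subcover and the maximum of the corresponding functions give $g \in \overline{A}$ with $f - \epsilon < g < f + \epsilon$ everywhere, i.e.\ $\|g - f\|_\infty < \epsilon$. As $\epsilon$ and $f$ are arbitrary, $\overline{A} = C(E)$.

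The main obstacle is the second step: all the real content lives in the uniform polynomial approximation of $|t|$ (equivalently of $\sqrt{s}$). I prefer the explicit Dini-monotone recursion above to invoking the one-dimensional Weierstrass theorem, since it is elementary and keeps the proof self-contained; the lattice reduction and the two compactness sweeps are then bookkeeping enabled by compactness of $E$ and the sup-norm density of $A$.
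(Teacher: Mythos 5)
Your proof is correct and complete: the closure-is-an-algebra step, the reduction of the lattice property to $|f|\in\overline{A}$ via the monotone recursion $p_{n+1}(s)=p_n(s)+\tfrac12\bigl(s-p_n(s)^2\bigr)$ with Dini's theorem, and the two finite-subcover sweeps are all sound. The paper itself offers no proof of this statement, citing it as Theorem~7.3.1 of Dieudonn\'e, and your argument is essentially that cited classical proof (Dieudonn\'e uses the same square-root recursion, lattice trick, and double compactness argument), so you have matched the intended route rather than diverged from it.
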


\begin{proof}[Proof of Theorem~\ref{theorem:universal}]
The family $\mathcal{M}$ forms a polynomial algebra follows from \cite[Lemma~5]{chen2019learning} and the observation that for any QR dynamics $T_1(u_l) = \bigotimes_{k=1}^{N_1} T_1^{(k)}(u_l)$ and $T_2(u_l) = \bigotimes_{k=1}^{N_2} T_2^{(k)}(u_l)$, where each $T_1^{(k)}, T^{(k)}_2$ has the form Eq.~\eqref{eq:dynamics-subsystem}, we again have $T(u_l) (\rho_1 \otimes \rho_2 )= T_1(u_l) \rho_1 \otimes T_2(u_l) \rho_2$ is of the form Eq.~\eqref{eq:dynamics}. Furthermore, $T(u_l) = T_1(u_l) \otimes T_2(u_l)$ is again convergent when initialized in a product state of the subsystems. Therefore, the family $\mathcal{M}$ forms a polynomial algebra consisting of $w$-fading memory maps.
    
Constant functions can be obtained by setting $\textbf{w}_{i_1, \ldots, i_{n}}^{r_{i_1}, \ldots, r_{i_n}}=0$ in Eq.~\eqref{eq:readout}. It remains to show that $\mathcal{M}$ separates points in $K^{-}([0, 1])$. That is, for any distinct $u, v \in K^{-}([0, 1])$ with $u_l \neq v_l$ for at least one $l$, we need to find a map $\overline{M}_{(T, h_{\textbf{w}})} \in \mathcal{M}$ such that $\overline{M}_{(T, h_{\textbf{w}})}(u)_0 \neq \overline{M}_{(T, h_{\textbf{w}})}(v)_0$. We show that we can construct a single-qubit quantum reservoir with this property. 

\begin{widetext}
\begin{equation} \label{eq:T-matrix}
\overline{T}(u_l) = |00 \rangle \langle 00| + (1-\epsilon) 
\begin{pmatrix}
0 & 0 & 0 & 0 \\
\sin^2(2J)(2u_l-1) & \cos^2(2J) & 0 & 0 \\
0 & 0 & \cos(2J)\cos(2\alpha) & -\cos(2J)\sin(2\alpha) \\
0 & 0 & \cos(2J)\sin(2\alpha) & \cos(2J) \cos(2\alpha)
\end{pmatrix}
\end{equation}
\end{widetext}

Consider a single-qubit quantum reservoir with a linear readout function ($n=1, R=1, N=1$). For the rest of this proof, we drop the subsystem index. This quantum reservoir consists of one system qubit and one ancilla qubit denoted as $\rho_a$. Choose the dynamics
\begin{equation}
\label{supp_eq:sep}
\begin{split}
\hspace*{-0.7em} \rho_{l} 
& = T(u_l)\rho_{l-1} \\
&  = (1-\epsilon) \left(u_l {\rm Tr}_{a} \left(e^{-i H} (\rho_{l-1} \otimes \rho^0_a) e^{i H} \right) \right. \\
& \qquad \left. + (1-u_l) {\rm Tr}_{a} \left(e^{-iH}  (\rho_{l-1} \otimes \rho^1_a) e^{iH}\right) \right) + \epsilon K_{\frac{I}{2}},
\end{split}
\end{equation}
where $\rho^j_a = |j \rangle \langle j|$ for $j=0, 1$, ${\rm Tr}_a$ denotes the partial trace over ancilla $\rho_a$ and $\epsilon \in (0, 1)$. The map $K_{\frac{I}{2}}$ is a CPTP map defined as $K_{\frac{I}{2}}(X) = {\rm Tr}(X) \frac{I}{2}$ for any $X \in \mathbb{C}^{2 \times 2}$. The Hamiltonian $H$ is of the Ising type $H = J(X^{(0)} X^{(1)} + Y^{(0)} Y^{(1)}) + \alpha \sum_{j=0}^{1} Z^{(j)}$, where $X^{(j)}, Y^{(j)}$ and $Z^{(j)}$ are the Pauli $X, Y$ and $Z$ operators on qubit $j$, with $j=0$ being the ancilla qubit.

We order an orthogonal basis for $\mathbb{C}^{2 \times 2}$ as $\{I, Z, X, Y\}$. The matrix representation of the CPTP map Eq.~\eqref{supp_eq:sep} is given by Eq.~\eqref{eq:T-matrix}. Since Eq.~\eqref{supp_eq:sep} is convergent, we can choose any initial condition $\rho_{-\infty} = |0 \rangle \langle 0|$ with the corresponding vector representation $\overline{\rho}_{\infty} = \frac{1}{2}\begin{pmatrix}1 & 1 & 0 & 0\end{pmatrix}$. Taking a linear readout function, for $u\in K^{-}([0, 1])$, the quantum reservoir implements
\begin{equation*}
\overline{M}_{(T, h_\textbf{w})}(u)_0 = 2\textbf{w}_1 \left[\left( \overrightarrow{\prod}_{j=0}^{\infty} \overline{T}(u_{-j}) \right) \overline{\rho}_{-\infty} \right]_2 + \textbf{w}_c,
\end{equation*}
where $[ \cdot ]_2$ is the second element of the vector corresponding to ${\rm Tr}(Z \rho_{0})/2$.

Now given two distinct inputs $u, v \in K^{-}([0, 1])$, suppose that $u_0 \neq v_0$. Then choose $J$ such that $\cos^2(2J)=0$ and therefore,
\begin{equation*}
\overline{M}_{(T, h_\textbf{w})}(u)_0 - \overline{M}_{(T, h_\textbf{w})}(v)_0 = 2\textbf{w}_1 (1-\epsilon)(u_0 - v_0) \neq 0.
\end{equation*} 

Suppose $u_0 = v_0$, note that in general
\begin{equation*}
\begin{split}
& \overline{M}_{(T, h_\textbf{w})}(u)_0 \\
& = \textbf{w}_1 \sin^{2}(2J) (1-\epsilon) \sum_{j=0}^{\infty} \left( (1-\epsilon) \cos^2(2J) \right)^j (2u_{-j} - 1).
\end{split}
\end{equation*}
Choose $\epsilon \in (0, 1)$ and $J$ such that $(1-\epsilon)\cos^{2}(2J) \in (0, 1-\epsilon)$. Then the above is a convergent power series and the subtraction is well-defined:

\begin{equation*}
\begin{split}
& \overline{M}_{(T, h_\textbf{w})}(u)_0 - \overline{M}_{(T, h_\textbf{w})}(v)_0 \\
& = 2\textbf{w}_1 \sin^2(2J) (1-\epsilon) \sum_{j=0}^{\infty} \left((1-\epsilon) \cos^2(2J)\right)^j(u_{-j} - v_{-j}).
\end{split}
\end{equation*}

The above is a power series of the form
\begin{equation*}
f(\theta) = 2\textbf{w}_1 \sin^2(2J)(1-\epsilon) \sum_{j=0}^{\infty} \theta^j(u_{-j} - v_{-j}), 
\end{equation*}
where $f(\theta)$ has a nonzero radius of convergence and is non-constant since $\theta = (1-\epsilon)\cos^2(2J) \in (0, 1-\epsilon)$, $(1-\epsilon)\sin^2(2J) \in (0, 1-\epsilon)$ and $u, v$ are assumed to be distinct. Furthermore, since we assume that $u_0 = v_0$, we have $f(0) = 0$. Invoking \cite[Theorem~3.2]{lang1985complex}, there exists $\beta > 0$ such that $f(\theta) \neq 0$ for all $|\theta| \leq \beta, \theta \neq 0$. This concludes the proof for separation of points. The universality of $\mathcal{M}$ now follows from the Stone-Weierstrass Theorem.
\end{proof}

\section{Invariance under time-invariant readout error}
\label{app-sec:readout-error}
The QR outputs are invariant under time-invariant readout error whenever a linear readout function is used. That is when $R=1$ in Eq.~\eqref{eq:readout}, the QR predicted outputs $\overline{y}_l$ remain unchanged under time-invariant readout error. Let $\mathcal{B} =\{ | i \rangle \}$ be the computational basis for an $n$-qubit system, with $i=1, \ldots, 2^n$. The readout error is characterized by a measurement calibration matrix $A$ whose $i,j$-th element $A_{i, j} = {\rm Pr}(i|j)$ is the probability of measuring the state $|i\rangle \in \mathcal{B}$ given that the state is prepared in the state $|j\rangle \in \mathcal{B}$.

We employ the readout error correction method described in Ref.~\cite{havlivcek2019supervised}. For an $n$-qubit QR, at each time step $l$, we execute $2^n$ calibration circuits with each circuit initialized in one of the $2^n$ computational basis elements. The outcomes are used to create the measurement calibration matrix $A_l$. The readout error at time step $l$ is corrected by applying the pseudo-inverse of $A_l$ to the measured outcomes from the experiments.

For all experiments, the measurement outcomes are stored as the count of measuring each basis elements in $\mathcal{B}$. Let $\textbf{v}_l = \begin{pmatrix} \textbf{v}^{1}_l & \cdots & \textbf{v}^{2^n}_l & 1 \end{pmatrix}$, where $\textbf{v}^i_l$ is the count of measuring $| i \rangle \in \mathcal{B}$ at time step $l$. Let $\textbf{z}_l = \begin{pmatrix} \overline{\langle Z^{(1)} \rangle}_l & \cdots & \overline{\langle Z^{(n)} \rangle}_l & 1\end{pmatrix}$, where $\overline{\langle Z^{(i)} \rangle}_l$ is the finite-sampled approximation of $\langle Z^{(i)} \rangle_l$ for $i=1,\ldots, n$. Then we have $\textbf{z}_l = \textbf{v}_l B$, where $B$ is a linear transformation. After applying readout error correction, we have $\textbf{z}'_l = \textbf{v}_l A^{+}_l B$, where $A_l^+$ is the pseudo-inverse of $A_l$. To optimize the readout function parameters $\textbf{w}$, collect all measurement data in a matrix $\textbf{v} = \begin{pmatrix} \textbf{v}_1^\top & \cdots & \textbf{v}_L^\top \end{pmatrix}^\top$ so that $\textbf{z} = \begin{pmatrix} \textbf{z}_1^\top & \cdots & \textbf{z}_L^\top \end{pmatrix}^\top = \textbf{v} B$, where $L$ is the sequence length. The linear output of the quantum reservoir computer is $\overline{\textbf{y}} = \textbf{v} B \textbf{w}$, where $\textbf{w}$ includes the bias term $\textbf{w}_c$. Append a corresponding row and column to $A^\dagger_l$ to account for the bias term. Suppose the readout error is time-invariant, then $A^{+} = A^{+}_l$ for $l = 1, \ldots, L$. The quantum reservoir computer output after readout error correction is $\overline{\textbf{y}'} = \textbf{v} A^+ B \textbf{w}'$. Assume that $A^+$  has all rows linearly independent, then ordinary least squares yields $B\textbf{w}' = A B \textbf{w}$. Now given test data $\textbf{v}_{\rm test}$ with readout error correction applied, $\textbf{v}_{\rm test} A^+ B \textbf{w}' = \textbf{v}_{\rm test} A^+ A B \textbf{w} = \textbf{v}_{\rm test} B \textbf{w}.$
Therefore, the QR predicted outputs are invariant under time-invariant readout error.

\section{Efficient implementations of a subclass on gate-model quantum computers}
\label{app-sec:qnd}
We detail the second (more efficient) implementation scheme described in Sec.~\ref{sec:realization} and show how the QR's convergence property leads to more efficient versions of both schemes in Sec.~\ref{sec:realization}.

If qubit reset is available, we can implement the second scheme in Sec.~\ref{sec:realization} based on quantum non-demolition (QND) measurements \cite{braginsky1995quantum}. In this scheme, to estimate $\langle Z^{(i)} \rangle_l$,  we no longer need to re-initialize and re-run the $N_m$ circuits from time $1$. Instead, for each $N_m$ circuits we perform a QND measurement of $Z^{(i)}$ at time $l$, continue running the circuit until the next QND measurement, and so forth. QND measurements ensure the information encoded in $\rho_{l}$ is retained from one timestep to the next. To process a length-$L$ input sequence, each $N_m$ circuits is run $S$ shots so that the average of $N_m S$ measurements at time $l$ estimates $\langle Z^{(i)} \rangle_l$. That is, this scheme needs $N_m S L$ applications of $T(u_l)$, but only $N_m S$ circuit runs compared to $N_m L S$ runs in the first scheme (see Sec.~\ref{sec:realization}). This presents a substantial saving as the number of circuit runs is independent of the input sequence length $L$.

To explain QND measurements, we first show that direct measurement of $Z$ on a ``system'' qubit is equivalent to coupling the qubit with an ancilla qubit via CNOT and measuring $Z_{\rm a}$, the Pauli $Z$ operator acting on the ancilla qubit ``a'' \cite{NC10}. To see this, let $| \psi \rangle_{\rm sys} = \alpha |0\rangle_{\rm sys} + \beta | 1 \rangle_{\rm sys}$ be the state of the system qubit. Prepare the ancilla qubit at the ground state $|0\rangle_{\rm a}$. We write
\begin{equation*}
{\rm CNOT} = | 0 \rangle \langle 0|_{\rm sys} \otimes I_{\rm a} + | 1 \rangle \langle 1|_{\rm sys} \otimes X_{\rm a}, 
\end{equation*}
where $I_{\rm a}$ and $X_{\rm a}$ are the identity and Pauli $X$ operators acting on the ancilla qubit. The system and ancilla state after applying CNOT is
\begin{equation*}
| \Psi \rangle = {\rm CNOT} | \psi \rangle_{\rm sys} \otimes | 0 \rangle_{\rm a} = \alpha |00\rangle + \beta |11\rangle.
\end{equation*}

\begin{center}
\begin{figure}[!ht]
\begin{equation*}
\hspace{2em} \Qcircuit @C=0.8em @R=.5em {
    \lstick{| \psi \rangle_{l-1}}       &\gate{U'_l}  &\multigate{1}{\mathcal{C}} &\qw    &\qw                   &\gate{U'_{l+1}}  &\multigate{1}{\mathcal{C}}   &\qw \\
    \lstick{| 0 \rangle^{\otimes n}} &\qw          &\ghost{\mathcal{C}}        &\meter &\push{| 0 \rangle^{\otimes n}}   &\qw              &\ghost{\mathcal{C}}          &\meter 
}
\end{equation*}
\caption{Quantum circuit implementing the QND measurements by coupling ancilla qubits $|0\rangle^{\otimes n}$ with the QR system qubits $| \psi \rangle_{l-1}$.}
\label{fig:QND}
\end{figure}
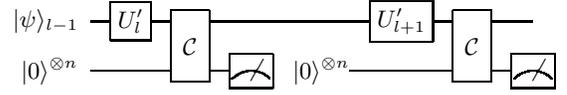
\end{center}

Measurement of $Z_{\rm a}$ on the ancilla qubit is described by the projectors $P_{+} = I_{\rm sys} \otimes | 0 \rangle \langle 0 |_{\rm a}$ and $P_{-} =  I_{\rm sys} \otimes | 1 \rangle \langle 1 |_{\rm a}$. Therefore, the probabilities and post-measurement system states are
\begin{equation*}
\begin{split}
& {\rm Pr}(+) = \langle \Psi | P_{+} | \Psi \rangle = |\alpha|^2,  \ \frac{{\rm Tr}_{\rm a} \left( P_{+} | \Psi \rangle \langle \Psi | P_{+} \right)}{{\rm Pr}(+)} = |0 \rangle \langle 0|_{\rm sys}, \\
& {\rm Pr}(-) = \langle \Psi | P_{-} | \Psi \rangle = |\beta|^2,  \ \frac{{\rm Tr}_{\rm a} \left( P_{-} | \Psi \rangle \langle \Psi | P_{-} \right)}{{\rm Pr}(-)} = |1 \rangle \langle 1|_{\rm sys}, \\
\end{split}
\end{equation*}
where ${\rm Tr}_{\rm a}(\cdot)$ is the partial trace over the ancilla qubit.

Now for an $n$-qubit QR, we associate each system qubit in the QR with its ancilla qubit. All $n$ ancilla qubits are prepared in the ground state. Suppose that when restricted to pure state preparation, we have drawn $N_m$ circuits using Monte Carlo sampling. For each of the $N_m$ circuits and each time step $l$, we apply the aforementioned ancilla-coupled measurement of $Z^{(i)}$ for each system qubit in the QR. After measuring the $n$ ancilla qubits, we reset and re-prepare them in the ground state for measurements at next time $l+1$; see Fig.~\ref{fig:QND}.

In Fig.~\ref{fig:QND}, $| \psi \rangle_{l-1}$ denotes the state of the system (QR) qubits and $|0 \rangle^{\otimes n}$ denotes the ancilla qubits initialized in the groud state. Here we have grouped the system and ancilla qubits and represent them using single wires. The unitary operator $U'_l$ is $U_0$ or $U_1$ with probabilities $(1-\epsilon)u_l$ and $(1-\epsilon)(1-u_l)$ and $U(l) = U'_l \mathcal{C}$, where $\mathcal{C}$ is a product of $n$ CNOT gates each acting on the $i$-th system-ancilla qubit pair. Measuring $Z^{(i)}_{\rm a}$ on the $i$-th ancilla qubit and resetting it at each time step $l = 1, \ldots, L$ is equivalent to having $L$ ancilla qubits associated to the $i$-th system qubit and measuring $Z^{(i)}_{{\rm a}, l}$ (i.e., the $l$-th ancilla qubit associated to the $i$-th system qubit). The resulting QR dynamics is
\begin{equation*}
T(u_l) \rho_{l-1} = (1-\epsilon)\left( u_l T_0 + (1-u_l) T_1 \right) \rho_{l-1} + \epsilon \sigma,
\end{equation*}
where $T_j (\rho_{l-1}) = {\rm Tr}_{\rm A}(U_j \mathcal{C} \rho_{l-1} \otimes (| 0 \rangle \langle 0|)^{\otimes n} \mathcal{C}^\dagger U_j^\dagger)$ for $j=0, 1$, and ${\rm Tr}_{\rm A}(\cdot)$ is the partial trace over all $n$ ancilla qubits denoted by ``A''.

We now show that the measured observables $Z^{(i)}_{a, l}$ commute at different times as required by QND. More generally, we will show that $Z_{{\rm a}, l} = \bigotimes_{i=1}^{n} O^{(i)}_{{\rm a}, l}$ ($l=1,\ldots,L$), where for each $i$ we have $O^{(i)}_{{\rm a}, l} = I^{(i)}$ (the identity operator on the $i$-th qubit) or $O^{(i)}_{{\rm a}, l} = Z^{(i)}_{{\rm a}, l}$, are QND observables. Firstly, we have the commutator $[Z_{a, k}, Z_{a, j}] = 0$ for all $k, j = 1, \ldots, L$. Denote the evolved observables in the Heisenberg picture by
\begin{equation*}
Z_{\rm a}(l) = U(1)^\dagger \cdots U(l)^\dagger Z_{{\rm a}, l} U(l) \cdots U(1)  = U_{l:1}^\dagger Z_{{\rm a}, l} U_{l:1},
\end{equation*}
where $U_{l:1} = U(l) \cdots U(1)$. For $k,j = 1, \ldots, L$ with $j < k$, we have
\begin{equation*}
\begin{split}
& [Z_{\rm a}(j), Z_{\rm a}(k)] \\
& = U_{j:1}^\dagger Z_{{\rm a}, j} U_{j:1} U_{k:1}^\dagger Z_{{\rm a}, k} U_{k:1} -  U_{k:1}^\dagger Z_{{\rm a}, k} U_{k:1} U_{j:1}^\dagger Z_{{\rm a}, j} U_{j:1} \\
& = U_{j:1}^\dagger  Z_{{\rm a}, j} U_{k:j+1}^\dagger  Z_{{\rm a}, k} U_{k:1} -  U_{k:1}^\dagger  Z_{{\rm a}, k} U_{k:j+1}  Z_{{\rm a}, j} U_{j:1} \\
& = U_{k:1}^\dagger [Z_{{\rm a}, j}, Z_{{\rm a}, k}] U_{k:1} = 0,
\end{split}
\end{equation*}
where in the second last equality we have used the fact that $ Z_{{\rm a}, j}$ commutes with the future unitary operations $U_{k:j+1}$. If $j > k$, apply the same argument as above shows $[Z_{\rm a}(j), Z_{\rm a}(k)] = - [Z_{\rm a}(k), Z_{\rm a}(j)] = 0$. The commutativity of $Z_{\rm a}(j)$ and $Z_{\rm a}(k)$ for all $j, k \geq 1$ means that the sequence $\{ Z_{\rm a}(j), j=1,2,\ldots\}$ has a joint probability distribution and constitutes a classical stochastic process. QND measurements on the sequence gives a realization of this stochastic process. 

The QR's convergence property (see Appendix~\ref{app-subsec:cv}) leads to more efficient versions of both schemes in Sec.~\ref{sec:realization}. Let $M$, $\rho_l$, $\rho_{l-M}$ and $\tilde{\rho}_l$ be as given in Sec.~\ref{sec:realization}. By the convergence property (Eq.~\ref{eq:cv}), we have 
\begin{equation}\label{app-eq:cv-realization}
\| \rho_l - \tilde{\rho}_l \|_1 \leq (1-\epsilon)^{M} \| \rho_{l-M} - (| 0 \rangle \langle 0 |)^{\otimes n}\|_1 \leq 2(1-\epsilon)^{M}.
\end{equation}
The difference between $\rho_l$ and $\tilde{\rho}_l$ can be made negligible by choosing $M$ appropriately based on $\epsilon$. If we perform repeated measurements on $\rho_l$ and $\tilde{\rho}_l$, then the finite-sample estimates of $\langle \widetilde{ Z^{(i)} }\rangle_l = {\rm Tr}(\tilde{\rho}_l Z^{(i)})$ and $\langle Z^{(i)} \rangle_l = {\rm Tr}(\rho_l Z^{(i)})$ will also be close; see Appendix~\ref{app-sec:mc-estimate}.
Using the convergence property, the first scheme in Sec.~\ref{sec:realization} requires $N_m S L$ circuit runs but only $N_m S L M$ applications of $T(u_l)$. When $L>M$, a substantial saving in the number of applications of $T$ can be obtained (for timesteps $l>M$) compared to the previous quadratic dependence on $L$. The second scheme now only requires at most $N_m S M$ applications of $T(u_l)$ and $N_m S$ circuit runs, both are \textit{independent} of the input sequence length $L$. This provides a path for fast and large scale temporal processing using QRs.

\section{Monte Carlo estimation}
\label{app-sec:mc-estimate}
For all schemes described in Sec.~\ref{sec:realization}, we can set $S=1$ and run $N_m$ Monte Carlo sampled circuits (possibly in parallel if many  copies of the same hardware are available) for a sufficiently large $N_m$. We show that the average of all $N_m$ measurements at time $l$ estimates $\langle Z^{(i)} \rangle_l$ and its variance vanishes as $N_m$ tends to infinity.

First consider estimating $\langle Z^{(i)} \rangle_l$ by re-initializing each $N_m$ circuit in $|0\rangle^{\otimes n}$ and re-running them from time 1 to time $l$ according to inputs  $\{u_{1}, \ldots, u_{l}\}$}. Recall that
$$\langle Z^{(i)} \rangle_l = {\rm Tr}( Z^{(i)}  \rho_l) = {\rm Tr}( Z^{(i)} T(u_l) \cdots T(u_1) (|0\rangle \langle 0|)^{\otimes n}),$$ where $T(u_k)$ is the input-dependent CPTP map defined in Eq.~\eqref{eq:dynamics} for $k = 1, \ldots, l$. Define independent discrete-valued random variables $X_{k}$ such that 
\begin{equation*}
\begin{split}
& {\rm Pr}(X_{k} = 0) = (1-\epsilon) u_{k},\\
& {\rm Pr}(X_{k} = 1) = (1-\epsilon)(1-u_{k}), \\
& {\rm Pr}(X_{k} = 2) = \epsilon. \\
\end{split}
\end{equation*}
To implement the QR, for each time $k$, we independently sample $N_m$ random variables $X_{k, j}$ $(j=1,\ldots,N_m)$ from the same distribution as $X_k$. 
Define
\begin{equation*}
T_{x} =\left\{  \begin{array}{cc} T_0, &  \hbox{if $x= 0$},\\   
  T_1,  & \hbox{if $x= 1$},\\
 K_{\sigma}, & \hbox{if $x= 2$}, \end{array} \right.
\end{equation*}
where $K_{\sigma}(\rho)  = \sigma$ is a constant CPTP map that sends any density operator $\rho$ to the constant density operator $\sigma$ in Eq.~\eqref{eq:dynamics-subsystem}. The random CPTP maps $T_{X_{k, j}}$ follow the same distribution as $X_{k, j}$ and are independent for each $k$ and $j$. Furthermore, $\mathbb{E}[T_{X_{k, j}}]= T(u_k)$.

For the $j$-th circuit, we implement a sequence of (random) CPTP maps $T_{X_{l, j}} \cdots T_{X_{1, j}}$ so that at time $l$, the (random) QR state is 
$$\rho^{\boldsymbol{X}_{l, j}} = T_{X_{l, j}} \cdots T_{X_{1, j}} (| 0 \rangle \langle 0|)^{\otimes n},$$
where $\boldsymbol{X}_{l,j} = (X_{1,j}, \ldots, X_{l,j})$.  For each $j$-th circuit, we measure $Z^{(i)}$ and denote its random outcome by $\overline{ Z^{(i)} }_{l,j}$. Note that for $j=1,\ldots, N_m$, $\overline{ Z^{(i)} }_{l,j}$ are independent (but not necessarily identically distributed) random variables taking values $\pm 1$ (eigenvalues of $Z^{(i)}$) with conditional probabilities (conditional on the random variables $\boldsymbol{X}_{l, j}$)
\begin{equation*}
{\rm Pr}\left(\overline{ Z^{(i)}}_{l,j} = z |  \boldsymbol{X}_{l, j} \right) = {\rm Tr}\left( \rho^{\boldsymbol{X}_{l, j}}  P^{(i)}_{z} \right),  \quad z = \pm 1,
\end{equation*}
where $P^{(i)}_{\pm 1}$ are the projectors such that $Z^{(i)} = P^{(i)}_{+ 1} - P^{(i)}_{-1}$. Consider the average of all $N_m$ measurement outcomes, by the law of total expectation,
\begin{equation*}
\begin{split}
& \frac{1}{N_m} \sum_{j=1}^{N_m} \mathbb{E} \left[ \overline{Z^{(i)}}_{l,j} \right] \\
& = \frac{1}{N_m} \sum_{j=1}^{N_m} \mathbb{E} \left[ \mathbb{E}\left[ \overline{ Z^{(i)} }_{l,j}  | \boldsymbol{X}_{l, j} \right] \right] \\
& = \frac{1}{N_m} \sum_{j=1}^{N_m} \mathbb{E}\left[ {\rm Tr}\left( Z^{(i)}  \rho^{\boldsymbol{X}_{l, j}} \right) \right] \\
& = \frac{1}{N_m} \sum_{j=1}^{N_m} {\rm Tr}\left( Z^{(i)}  \mathbb{E}[T_{X_{l, j}}] \cdots \mathbb{E}[T_{X_{1, j}}] (| 0 \rangle \langle 0 |)^{\otimes n} \right)  \\
& = \frac{1}{N_m} \sum_{j=1}^{N_m} {\rm Tr}\left( Z^{(i)}  T(u_l) \cdots T(u_1) (| 0 \rangle \langle 0 |)^{\otimes n} \right) \\
& = {\rm Tr}(\rho_{l} Z^{(i)}) = \langle Z^{(i)} \rangle_l,
\end{split}
\end{equation*}
therefore the finite-sample estimate is unbiased. Moreover, using the fact that 
$$\mathbb{E}\left[\left( \overline{ Z^{(i)} }_{l,j}\right)^2\right] = \sum_{z=\pm1} z^2 {\rm Pr}\left(\overline{ Z^{(i)}}_{l,j} = z \right) =1,$$ 
the variance of the average of $N_m$ measurements is
\begin{equation*}
\begin{split}
& {\rm Var}\left[\frac{1}{N_m} \sum_{j=1}^{N_m} \overline{  Z^{(i)} }_{l,j} \right]\\
& = \frac{1}{N^2_m} \sum_{j=1}^{N_m} {\rm Var}\left[ \overline{ Z^{(i)} }_{l,j} \right] = \frac{1}{N_m} \left(1 - \langle Z^{(i)}\rangle^2_l\right).
\end{split}
\end{equation*}

Using the convergence property, to estimate $\langle Z^{(i)} \rangle_l$ for a sufficiently large $l$ (that depends on $\epsilon$), we re-initialize $N_m$ circuits at time $l-M$ and run the circuits according to inputs $\{u_{l-M+1}, \ldots, u_l \}$. Let $\langle \widetilde{Z^{(i)}} \rangle_l = {\rm Tr}(Z^{(i)} \tilde{\rho}_l)$ where
$$\tilde{\rho}_l = T(u_l) \cdots T(u_{l-M+1}) (|0 \rangle \langle 0|)^{\otimes n}.$$ 
In this setting, for the $j$-th circuit, we implement a sequence of (random) CPTP maps $T_{X_{l,j}} \cdots T_{X_{l-M+1, j}}$ so that the (random) QR state at time $l$ is
$$
\rho^{\widetilde{\boldsymbol{X}}_{l, j}} = T_{X_{l, j}} \cdots T_{X_{l-M+1, j}} (|0 \rangle \langle 0 |)^{\otimes n},
$$
where $\widetilde{\boldsymbol{X}}_{l, j} = (X_{l-M+1, j}, \ldots, X_{l, j})$. Let $\widetilde{Z^{(i)}}_{l, j}$ be the random outcome of measuring $Z^{(i)}$. The conditional probabilities are
$$
{\rm Pr}\left( \widetilde{Z^{(i)}} = z | \widetilde{\boldsymbol{X}}_{l, j} \right) = {\rm Tr}\left(\rho^{\widetilde{\boldsymbol{X}}_{l, j}} P^{(i)}_z \right), \quad z = \pm 1.
$$
A similar argument as above shows that the average of all $N_m$ measurements satisfies
\begin{equation*}
\begin{split}
\mathbb{E}\left[ \frac{1}{N_m} \sum_{j=1}^{N_m} \widetilde{Z^{(i)}}_{l, j} \right] & = \langle \widetilde{Z^{(i)}} \rangle_{l}, \\
{\rm Var}\left[ \frac{1}{N_m} \sum_{j=1}^{N_m} \widetilde{Z^{(i)}}_{l, j} \right] & = \frac{1}{N_m} \left(1 -  \langle \widetilde{Z^{(i)}} \rangle_{l}^2 \right).
\end{split}
\end{equation*}
The convergence property and Eq.~\ref{app-eq:cv-realization} ensure that the bias (in mean) vanishes exponentially fast,
\begin{equation*}
\begin{split}
& \left| \mathbb{E}\left[ \frac{1}{N_m} \sum_{j=1}^{N_m} \widetilde{Z^{(i)}}_{l, j} \right] - \langle Z^{(i)} \rangle_l \right| \\
& = \left| {\rm Tr} \left( Z^{(i)} (\tilde{\rho}_{l} - \rho_l) \right) \right| \\
& \leq \| \tilde{\rho}_l - \rho_l \|_1 \leq 2(1-\epsilon)^M,
\end{split}
\end{equation*}
where we have used the fact that for any Hermitian matrix $A$, $| {\rm Tr}(Z^{(i)} A)| \leq \sigma_{\max}(Z^{(i)}) \|A\|_1$, with $\sigma_{\max}(Z^{(i)})=1$ denotes the maximum singular value of $Z^{(i)}$. This shows that the bias can be exponentially suppressed by choosing $M$ appropriately based on $\epsilon$, so that the estimates of $\langle \widetilde{Z^{(i)}} \rangle_l$ and $\langle Z^{(i)} \rangle_l$ are also close.

\section{Experimental and numerical details}
\label{app-sec:exp}
\subsection{Nonlinear temporal processing tasks}
\label{app-subsec:tasks}
We give detailed descriptions of the five nonlinear temporal processing tasks. Tasks~\Romannum{1} and \Romannum{2} are governed by linear reservoirs with polynomial readout \cite{BC85,grigoryeva2018universal}, described by
\begin{equation*}
\begin{cases}
\textbf{x}_{l} = A \textbf{x}_{l-1} + c u_l \\
y_{l} = h(\textbf{x}_l),
\end{cases}
\end{equation*}
where $A \in \mathbb{R}^{2000 \times 2000}$ and $c \in \mathbb{R}^{2000}$. To have short-term or fading memory, we rescale the maximum singular value $\sigma_{\max}(A) = 0.5$ for Task~\Romannum{1} and $\sigma_{\max}(A) = 0.99$ for Task~\Romannum{2}, meaning that Task~\Romannum{2} retains the initial condition and past inputs for a longer time duration. The sparsity of $A$ determines the pairwise correlation between reservoir state elements. We set $A$ to be a full (dense) matrix for Task~\Romannum{1} and 95\% sparse for Task~\Romannum{2}. The readout function $h$ is a degree $2$ polynomial in the state elements. Task~\Romannum{3} is a recently proposed classical reservoir computing model that achieves good performance in chaotic system modeling \cite{grigoryeva2018universal}, described by
\begin{equation*}
\begin{cases}
\textbf{x}_{l} = p(u_l) \textbf{x}_{l-1} + q(u_l) \\
y_l = \textbf{w}^T \textbf{x}_l,
\end{cases}
\end{equation*}
where $p(u_l) = \sum_{j=0}^{4} A_j u^j_l$ and $q(u_l) = \sum_{j=0}^{2} B_j u^j_l$ are matrix-valued polynomials in the input $u_l$, $A_j \in \mathbb{R}^{700 \times 700} \bigoplus \mathbb{R}^{700 \times 700}$ and $B_j \in \mathbb{R}^{700 \times 1} \bigoplus \mathbb{R}^{700 \times 1}$. For Task~\Romannum{3}, We rescale $\sigma_{\rm max}(A_j) < \frac{1}{3}$ for all $j$ so that both it exhibits short-term memory. Task~\Romannum{4} is a Volterra series with kernel order 5 and memory 2, commonly applied to model responses of nonlinear systems in control engineering \cite{BC85},
\begin{equation*}
y_l = \textbf{w}_c + \sum_{i=1}^{5} \sum_{j_1, \ldots, j_i=0}^{2} \textbf{w}^{j_1, \ldots, j_i}_i \prod_{k=1}^{i} u_{l-j{_k}}.
\end{equation*}
For the first three tasks, elements of $A, A_j, B$ and $\textbf{w}$ are uniformly randomly sampled from $[-1, 1]$. The constant $c$ and coefficients of readout function $h$ are also sampled from the same distribution. The same applies to the kernel coefficients $\textbf{w}^{j_1, \ldots, j_i}_i$ and $\textbf{w}_c$ in Task~\Romannum{4}.

Task~\Romannum{5} is a missile moving with a constant velocity in the horizontal plane, a continuous-time long-term memory nonlinear map \cite{ni1996new} described by
\begin{equation*}
\begin{cases}
\dot{\textbf{x}}_1 = \textbf{x}_2 - 0.1 \cos(\textbf{x}_1)(5 \textbf{x}_1 - 4 \textbf{x}^3_1 + \textbf{x}^5_1) - 0.5 \cos(\textbf{x}_1) u \\
\dot{\textbf{x}}_2 = -65 \textbf{x}_1 + 50 \textbf{x}_1^3 - 15 \textbf{x}^5_1 - \textbf{x}_2 - 100 u,
\end{cases} 
\end{equation*}
with $y= \textbf{x}_2$. This missile dynamics is simulated using the $(4, 5)$ Runge-Kutta formula in MATLAB, with a sampling time of $\tau = 1/80$ for 1 second. 

\subsection{Quantum circuits for QRs}
\label{app-subsec:circuits}
We detail the circuits implementing the QR dynamics in our proof-of-principle experiments presented in Sec.~\ref{sec:exp}. The quantum circuits for the 4-qubit and 10-qubit Boeblingen QRs are shown in Fig.~\ref{fig:7}. The quantum circuits for the 5-qubit Ourense and 5-qubit Vigo QRs are shown in Fig.~\ref{fig:8}.

\begin{figure*}[!ht]
\includegraphics[scale=0.9]{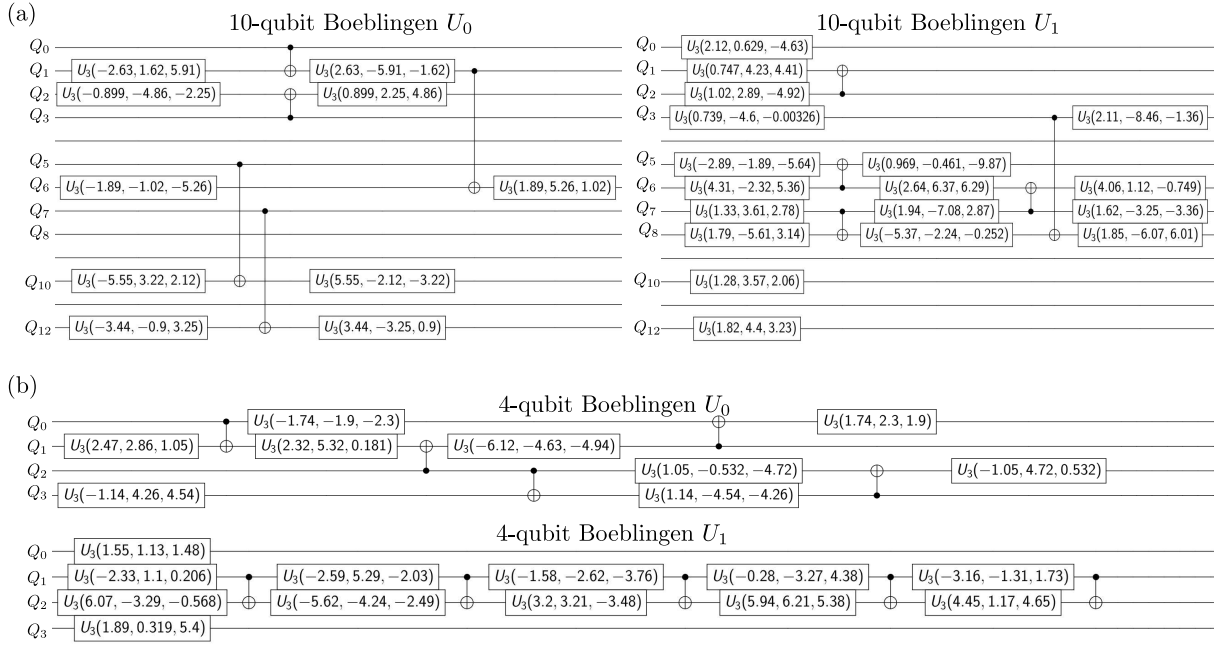}
\caption{Quantum circuits for the (a) 10-qubit QR and (b) 4-qubit QR on the Boeblingen device.}
\label{fig:7}
\end{figure*}

\begin{figure*}[!ht]
\includegraphics[scale=0.9]{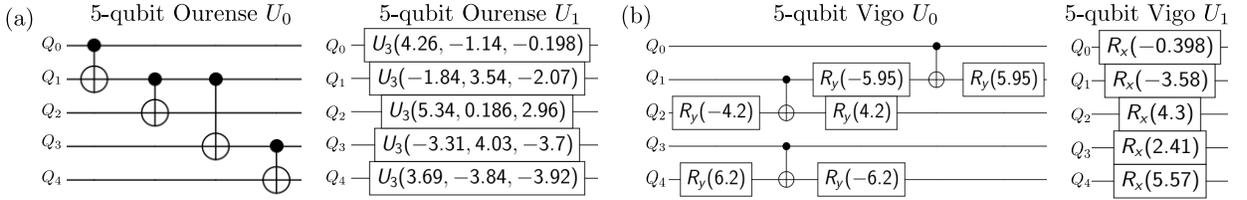}
\caption{Quantum circuits for the (a) 5-qubit Ourense QR and (b) 5-qubit Vigo QR.}
\label{fig:8}
\end{figure*}

\subsection{Full input-output sequential data}
\label{app-subsec:data}
Since we bypass the washout for QRs by initializing them in the state $|0 \rangle^n$, this is equivalent to washing out their initial conditions with a length $L_w$ constant input sequence $u_l = 1$. The same washout has been applied to all nonlinear tasks. We have checked that $L_w=50$ is enough for all tasks to reach steady states given the same initialization $\textbf{x}_0 = 0$. Particular caution has been taken to washout Task~\Romannum{4}, in which we set $u_l = 1$ for $l=-2, -1$. For each target map, we discard the first four input-output sequence data points, and the corresponding QR experimental data, to remove the transitory output response due to the change in input statistics. In Fig.~\ref{fig:9}, we show the full washout, train and test input-output target sequences for both the multi-step ahead prediction and the map emulation problems. Fig.~\ref{fig:10} plots the full target output sequences, the train and test QR outputs on the multi-step ahead prediction problem. Fig.~\ref{fig:11} plots the full target output sequences, the train and test QR outputs on the map emulation problem. In all figures, the transitory responses are indicated by dotted lines.
\begin{figure*}[!ht]
\centering
\includegraphics[scale=0.8]{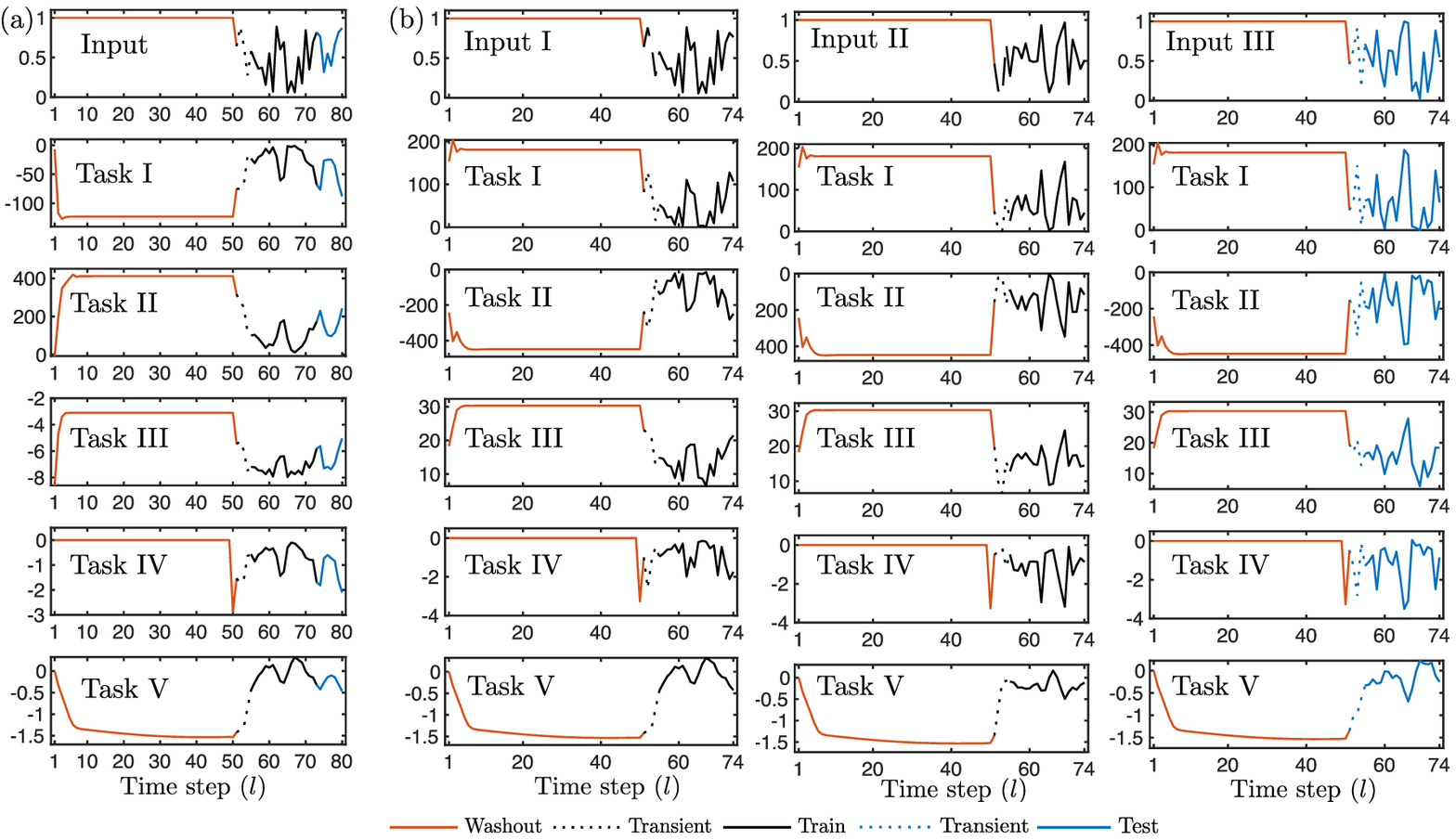}
\caption{Full washout, train and test input-output sequences for (a) The multi-step ahead prediction problem and (b) The map emulation problem. The first row in (a) and (b) shows the input sequences.}
\label{fig:9}
\end{figure*}

\begin{figure*}[!ht]
\centering
\includegraphics[scale=0.8]{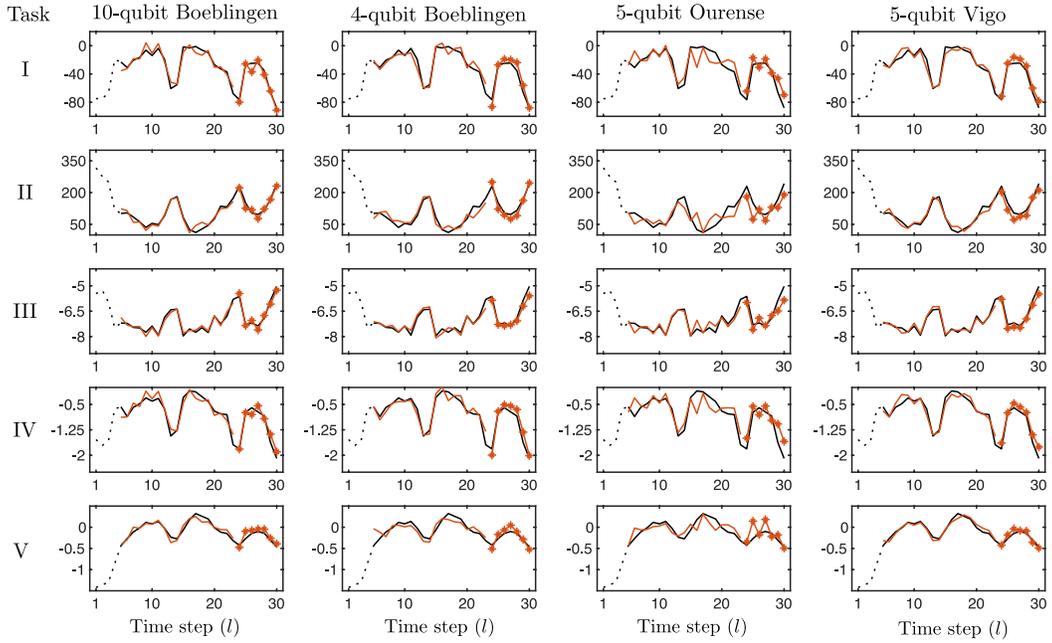}
\caption{The full target output sequences, the train and test output sequences of the four QRs for each task on the multi-step ahead prediction problem. Each column corresponds to each $n$-qubit QR outputs. Each row corresponds to each task.}
\label{fig:10}
\end{figure*}

\begin{figure*}[!ht]
\centering
\includegraphics[scale=0.9]{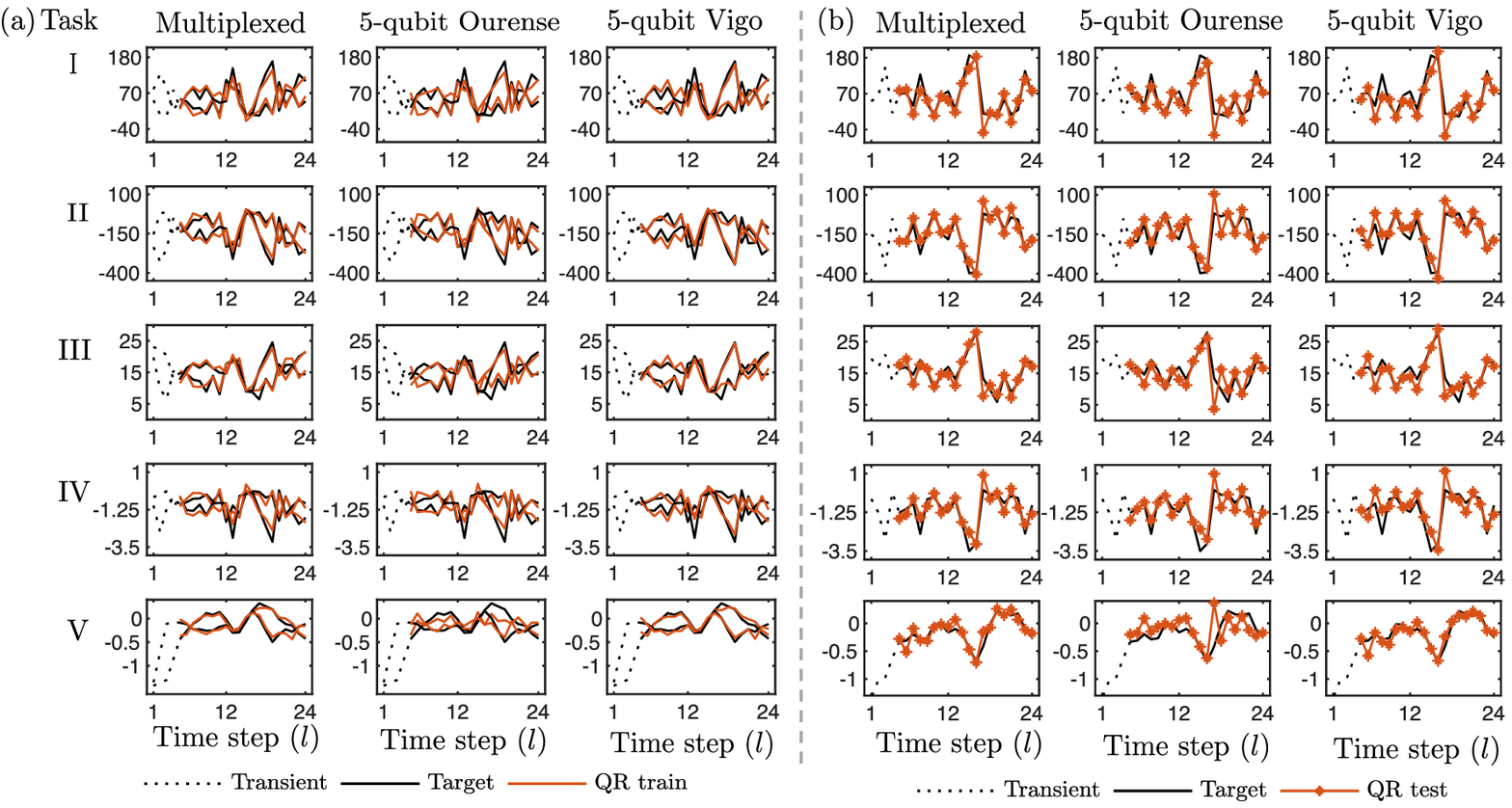}
\caption{The full target output sequences, the train and test output sequences of the QRs for each task on the map emulation problem. (a) Shows the two train output sequences. (b) Shows the test output sequence. The columns correspond to the multiplexed 5-qubit QRs, 5-qubit Ourense QR and the 5-qubit Vigo QR from the left to the right. Each row corresponds to each task.}
\label{fig:11}
\end{figure*}

\subsection{Measurement and simulation data}
\label{app-subsec:measurement-data}
We simulate the four QRs using the IBM Qiskit simulator under ideal and noisy conditions. The noise models used are obtained from the device calibration data. We fetched the updated device calibration data each time a job was executed on the hardware. The circuits simulated are the same as the circuits employed for the experiments and so is the number of shots. For the multi-step ahead prediction problem, the 10-qubit Boeblingen QR experienced a significant deviation from simulated results on qubits $Q=1, 8$ (see Fig.~\ref{fig:12}), resulting in larger NMSE $=0.26, 0.29, 0.068, 0.15, 6.1$ for the four tasks. After setting the readout parameters $\textbf{w}_1 = \textbf{w}_8=0$ for $Q=1, 8$, this issue was circumvented at the cost of using a fewer number of computational features. The resulting 10-qubit Boeblingen QR still achieves performance improvement over other QRs with a smaller number of qubits on the multi-step ahead prediction problem in the first three tasks. A time-invariant readout error in qubit $i$ linearly transforms the expectation $\langle Z^{(i)} \rangle_{l}$. The QR predicted outputs are invariant under time-invariant readout errors when using linear regression to optimize $\textbf{w}, \textbf{w}_c$ as derived in Appendix~\ref{app-sec:readout-error}. However, for the 10-qubit Boeblingen QR, the deviations in qubits $Q=1, 8$ were time-varying. On the other hand, the 5-qubit Vigo device experienced almost time-invariant deviations in qubit $Q=0$ as shown in Figs.~\ref{fig:12} and \ref{fig:13}, but this does not affect the performance of this QR noticeably. The experimental results of the 5-qubit Ourense QR follow the noisy simulation results closely. For the map emulation problem, the experimental results of both 5-qubit QRs follow the simulated results closely, with an almost time-invariant shift in $Q=0$ for the 5-qubit Vigo QR.

\begin{figure*}[!ht]
\includegraphics[scale=0.9]{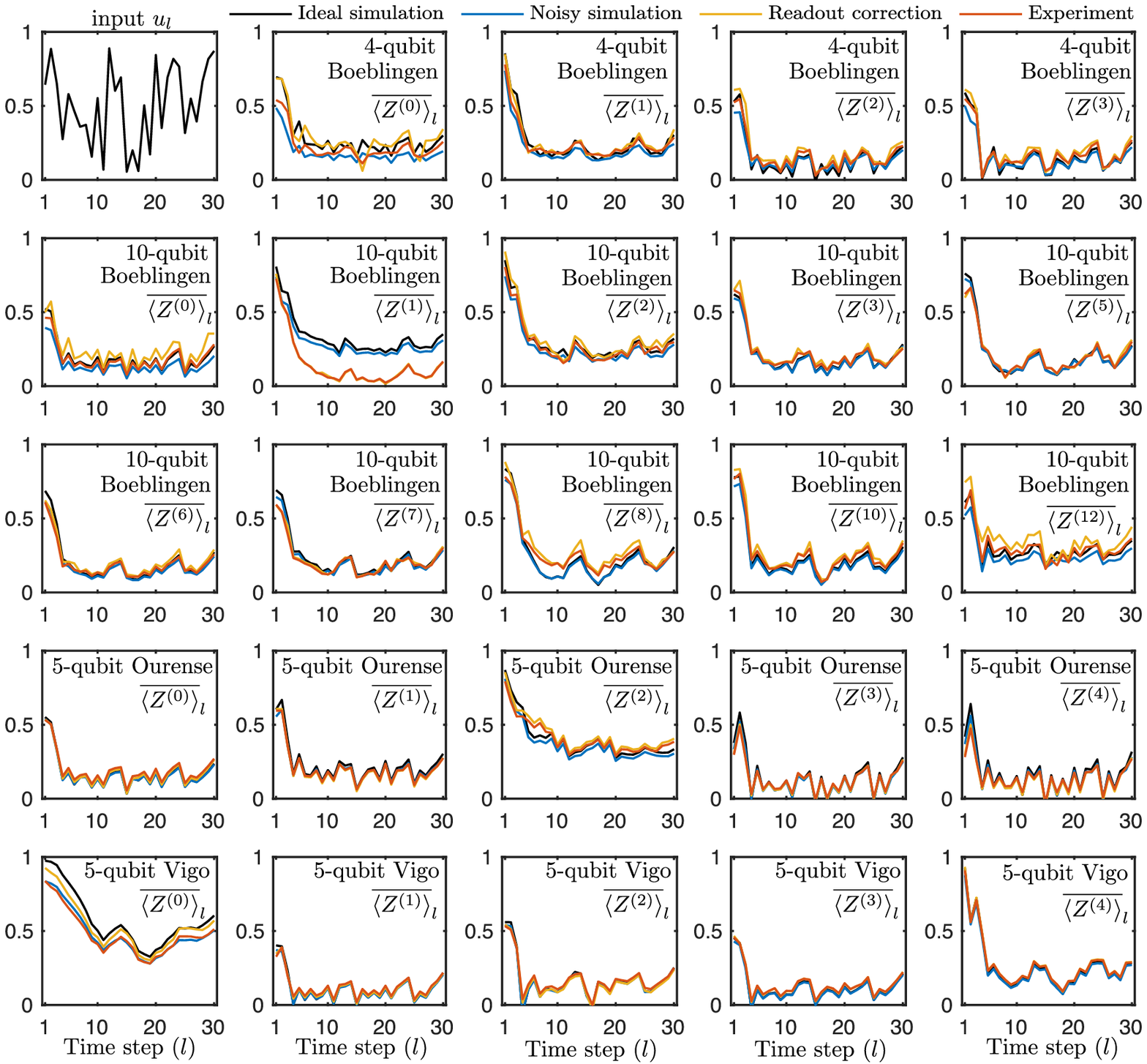}
\caption{Input sequence, experimental and simulation results for each qubit of the four QRs at each time step $l=1, \ldots, 30$, for the multi-step ahead prediction problem.}
\label{fig:12}
\end{figure*}

\begin{figure*}[!ht]
\includegraphics[scale=0.9]{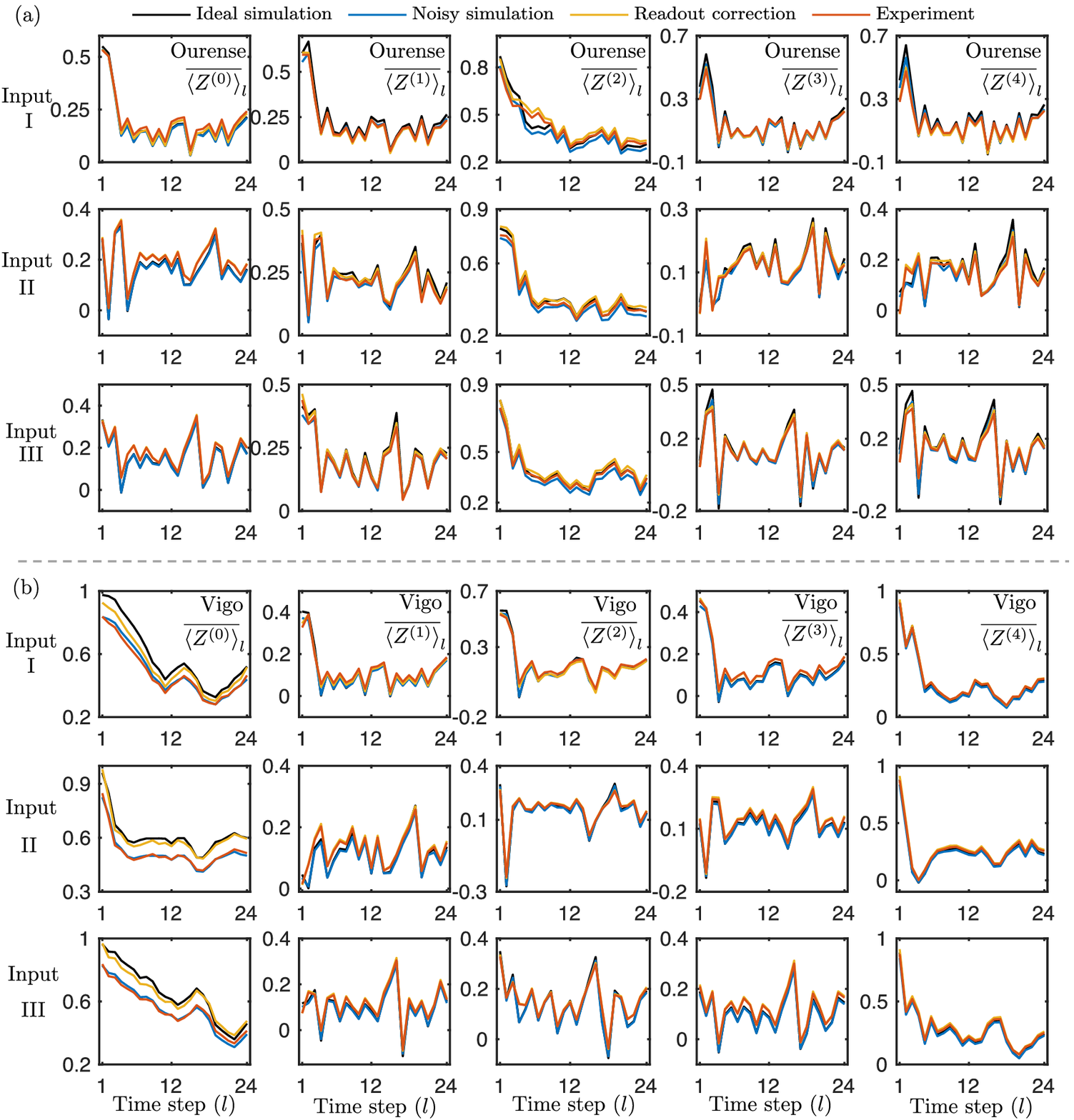}
\caption{Experimental and simulation results for each qubit $i=0, \ldots, 4$ and each time step $l=1, \ldots, 24$, for the map emulation problem. Three input sequences are used in this problem, labeled as inputs I, II and III. Row $i$ in a sub-figure corresponds to the experimental data for the $i$-th input sequence. Column $j$ corresponds to the experimental data for the $j$-th qubit. (a) Shows the experimental data for the 5-qubit Ourense QR. (b) Shows the experimental data for the 5-qubit Vigo QR.}
\label{fig:13}
\end{figure*}

\subsection{Hardware specifications}
\label{app-subsec:hardware}
The experiments were conducted on the IBM 20-qubit Boeblingen (version 1.0.0), 5-qubit Ourense (version 1.0.0) and 5-qubit Vigo (version 1.0.0) superconducting quantum processors \cite{IBMQ}. The gate duration for an arbitrary single-qubit rotation gate $U_3$ \cite{cross2017open} is $\tau_{U_3} \approx 71.1$ ns for all qubits whereas the CNOT gate durations differ for different qubits. 

See Fig.~\ref{fig:7} for the 4-qubit and 10-qubit Boeblingen QR quantum circuits. The circuits are chosen such that both QRs have the same number of layers in $U_0$ and $U_1$. In this setting, the maximum duration of a circuit executed on the Boeblingen device is the same for both QRs. As stated in the main article, the chosen qubits for the 4-qubit QR and the 10-qubit QR on the Boeblingen device are $Q=0,1,2,3$ and $Q=0, 1, 2, 3, 5, 6, 7, 8, 10, 12$. These qubits were chosen due to their longer coherence times, shorter CNOT gate durations, smaller gate and readout errors. During the experiment, the maximum readout error was $10^{-2}$ and the maximum $U_3$ gate error implemented was $10^{-3}$. The maximum CNOT gate error implemented was $4.3\times10^{-2}$ and the maximum CNOT gate duration was $\tau_{\rm CNOT} \approx 427$ ns. We assume that commuting gates can be executed in parallel. We choose $N_0 = N_1 = 5$ numbers of layers for $U_0$ and $U_1$ in the 4-qubit and 10-qubit Boeblingen QRs. The maximum length of any input sequence (including the transient) for the multi-step ahead prediction and the map emulation problems is $L=30$. Therefore, the maximum numbers of $U_3$ gate executions and CNOT gate executions is $5L = 5 \times 30 = 150$. The maximum duration of a circuit executed on the Boeblingen device was $150 \times (\tau_{U_3} + \tau_{\rm CNOT}) \approx 150 \times (71.1 + 427) = 74.7 \ \mu$s, within the coherence times ($T_1, T_2$) for most qubits chosen.

Fig.~\ref{fig:8} shows the quantum circuits for the 5-qubit Ourense and 5-qubit Vigo QRs. Owing to the more restricted qubit couplings in these 5-qubit devices, the circuits for the  5-qubit QRs are simpler than that of the 4-qubit and 10-qubit Boeblingen QRs. To combine different computational features for the spatial multiplexing technique, we choose circuits that are sufficiently different for these two 5-qubit QRs. In particular, the 5-qubit Vigo QR consists of single-qubit rotational $Y$ gates in $U_0$ and single-qubit rotational $X$ gates in $U_1$. On the other hand, the 5-qubit Ourense QR uses arbitrary single-qubit rotational gates $U_3$ only in the circuit implementing $U_1$. 

The 5-qubit Ourense device achieves the same order of magnitude in readout errors, coherence times and CNOT gate durations as the 20-qubit Boeblingen device, but lower CNOT gate errors. For the Ourense device, the maximum $U_3$ gate error and readout error implemented were $0.9 \times 10^{-3}$ and $4.1 \times 10^{-2}$, and the maximum CNOT gate error implemented was $8 \times 10^{-3}$, a lower error compared to the Boeblingen device. The maximum CNOT gate duration implemented was $\tau_{\rm CNOT} \approx 576$ ns. For the 5-qubit Ourense QR, the circuit implementing $U_0$ is longer than that for $U_1$. The $U_0$ circuit consists of four CNOT gates, and the maximum duration of a circuit executed for the 5-qubit Ourense QR was $4L \times \tau_{\rm CNOT} \approx 70\,\mu$s, also within the coherence limits of most qubits.

The 5-qubit Vigo device is similar to the 5-qubit Ourense device. They have the same qubit couplings and share similar noise profile and hardware specifications. Rotational $X$ and $Y$ gates were used on this device, with gate duration $\tau = 35.5$ ns. The maximum single-qubit gate error implemented was $0.8 \times 10^{-3}$ and the maximum readout error implemented was $7.8 \times 10^{-2}$. The maximum CNOT gate error and gate duration implemented was $1.3 \times 10^{-2}$ and $\tau_{\rm CNOT} \approx 462.2$ ns, respectively. For this QR, $U_0$ is the longer circuit consisting of three layers of single-qubit rotation $Y$ gates and two layers of CNOT gates. Therefore, the maximum duration of a circuit implemented was $(3\tau+2\tau_{\rm CNOT})L =(3 \times 35.5 + 2 \times 462.2) \times 30 \approx 31 \ \mu$s.

\clearpage
\bibliography{prapplied.bib}{}

\end{document}